%% file: main.tex
\documentclass[letterpaper, 10 pt, conference]{ieeeconf} 
\input{preamble.tex}

\newcommand{\real}{\mathbb{R}} 

\newcommand{\flow}{\bvarphi}

\title{\textbf{Safety Filtering with an Infinite Number of Constraints}}
\author{Max H. Cohen$^1$, Pio Ong$^2$, Pol Mestres$^2$, and Aaron D. Ames$^2$%
\thanks{$^1$ The author is with the Department of Electrical and Computer Engineering, North Carolina State University, Raleigh, NC \texttt{mhcohen2@ncsu.edu}.}
\thanks{$^2$ The authors are with the Department of Mechanical and Civil Engineering, California Institute of Technology, Pasadena, CA
\texttt{pioong,mestres,ames@caltech.edu}.
}
}

\begin{document}

\maketitle
\begin{abstract}
    Control barrier functions (CBFs) provide a rigorous framework for designing controllers enforcing safety constraints. While CBF theory is well-developed for a finite number of safety constraints, certain applications, e.g., backup CBFs, require an infinite number of constraints. Despite the practical success of CBFs, several fundamental questions remain unanswered when safe sets are defined with an infinite numbers of constraints, including: necessary and sufficient conditions for forward set invariance, the actual definition of CBFs associated with these sets, the regularity properties of the resulting controllers, and the ability to reduce a collection of infinite constraints to a finite number. This paper addresses these questions by extending CBF theory to the infinite constraint setting. We identify regularity conditions under which Nagumo's Theorem reduces to barrier-like inequalities and when the associated CBF controllers are at least continuous. We further connect these results to optimal-decay CBFs, bridging theoretical conditions for invariance and practical instantiations of the resulting controller. Finally, we illustrate how the developed theory addresses limitations of backup CBFs.
\end{abstract}

\section{Introduction}
Control barrier functions (CBFs) \cite{AmesTAC17} have been widely used to design controllers enforcing safety constraints on modern autonomous systems. 
The theoretical foundations of CBFs are rooted in the notion of set invariance \cite{Blanchini}, and, in particular, Nagumo's Theorem \cite{nagumo1942lage}, which provides necessary and sufficient conditions for set invariance. In the broader literature on safety-critical control, encompassing not just CBFs but also techniques such as model predictive control \cite{WabersichTAC22} and reachability analysis \cite{HerbertCDC21}, Nagumo's Theorem is frequently invoked to relate properties of the dynamics on the boundary of a prescribed safe set to invariance of that set. Yet many of these stated conditions are only equivalent to Nagumo's Theorem when the dynamics and the set itself satisfy certain regularity conditions. 

CBFs now provide a comprehensive theory of set invariance for safe sets defined as the zero superlevel set of a single continuously differentiable function $h(\bx)\geq0$ and dynamics described by a locally Lipschitz vector field $\dot{\bx}=\bF(\bx)$. Under the assumption that zero is a regular value of $h$, Nagumo's Theorem is equivalent to the often stated condition: $h(\bx)=0$ implies $\dot{h}=\pdv{h}{\bx}\bF(\bx)\geq0$; see \cite[Ex. 4.1.29]{AbrahamMarsdenRatiu}. When zero is not a regular value of $h$, this condition is no longer equivalent to Nagumo's Theorem nor is it meaningful for set invariance unless additional assumptions are placed on the behavior of $\dot{h}$ outside of the safe set via extended class $\mathcal{K}$ functions \cite{KondaLCSS21}. 

Given that many real world systems are subject to multiple safety constraints, there is also a healthy body of work on CBF theory for sets defined by multiple constraints $h_i(\bx)\geq0$ for $i\in\{1,\dots,N\}$. Early approaches focused on composing multiple CBFs into a single (candidate) CBFs via Boolean combinations \cite{GlotfelterLCSS17}. While such approaches reduce a large collection of CBF constraints to a single constraint, the resulting CBF is generally non-smooth, leading to discontinuous controllers. An alternative approach is to directly consider multiple CBFs \cite{XuAutomatica18,BreedenACC23,IsalyTAC24,CohenCDC25}, leading to controllers subject to multiple CBF constraints.
For certain classes of dynamics and constraints, these multi-CBF controllers can be shown to be locally Lipschitz \cite{XuAutomatica18,CohenCDC25}. More generally, continuity of multi-CBF controllers can be guaranteed by imposing various constraint qualification conditions \cite{IsalyTAC24,PO-BC-LS-JC:23-auto,PM-AA-JC:25-ejc}.

Despite the maturity of the multiple CBF literature, there are applications that theoretically require \emph{infinitely} many constraints. The most prominent example is the backup CBF framework \cite{gurriet2020scalable,GurrietICCPS18,YuxiaoCDC21,TamasACC23}, where a conservative safe set---equipped with a backup controller that renders it invariant---is expanded to a larger safe set by leveraging the flow of the system under the backup controller. In this setting, the resulting invariant set and corresponding CBF conditions are characterized by infinitely many constraints---one for each point in time along the horizon of the backup flow. Other applications include characterizing collision-free configurations of rigid bodies~\cite{Thirugnanam22}, or enforcing safety under state uncertainty~\cite{DeanCoRL20} wherein the safe set is defined by the configurations in which all state estimates are safe.

While CBF techniques with infinite constraints have enjoyed practical success, fundamental theoretical questions remain unanswered: What are necessary and sufficient conditions for forward invariance of these sets? When do the resulting controllers enjoy desirable regularity properties, such as continuity? Can the infinite collection of constraints be reduced to a finite collection whose satisfaction implies that of the infinite collection? In this paper, we answer these questions by extending fundamental results on CBF theory to the infinite constraint setting. We first identify regularity conditions that reduce Nagumo's Theorem for set invariance to  barrier-like inequalities. We then connect these results to CBFs and \emph{optimal-decay} CBFs (OD-CBFs) \cite{ZengACC21,PioCDC25}. %which obviates the need to search for an appropriate class $\mathcal{K}$ function in the corresponding CBF conditions. 
We also outline results guaranteeing continuity of optimization-based controllers and discuss when invariance is preserved during reduction to finitely many constraints. We finally connect our results to backup CBFs, illustrating how our framework overcomes certain limitations of backup CBFs.

\section{Background and Problem Formulation}\label{sec:prelim}

In this paper we consider nonlinear control affine systems:
\begin{equation}\label{eq:control-affine}
    \dot{\bx} = \bf(\bx) + \bg(\bx)\bu,
\end{equation}
with state $\bx\in\R^n$ and input $\bu\in\R^m$ (the results in the paper extend for arbitrary convex input sets $\Uc$, but we keep $\Uc = \R^m$ for simplicity). The drift $\bf\,:\,\R^n\rightarrow\R^n$ and control directions $\bg\,:\,\R^n\rightarrow\R^{n\times m}$ are assumed locally Lipschitz. Our main objective is to design a controller $\bk\,:\,\mathcal{D}\rightarrow\real^m$ for \eqref{eq:control-affine} that renders a desired \emph{safe set} $\mathcal{S}\subset\mathcal{D}\subseteq\R^n$ forward invariant for the closed-loop system:
\begin{equation}\label{eq:closed-loop}
    \dot{\bx} = \bf(\bx) + \bg(\bx)\bk(\bx) \eqqcolon \bF(\bx).
\end{equation}
A common approach to designing controllers enforcing the forward invariance of safe sets is CBFs. Here, one considers candidate safe sets of the form:
\begin{equation}\label{eq:S}
    \begin{aligned}
        \mathcal{S} = & \{\bx\in\R^n\mid h(\bx) \geq 0\}, \\
        \partial\mathcal{S} = & \{\bx\in\R^n\mid h(\bx) = 0\},
    \end{aligned}
\end{equation}
where $h\,:\,\R^n\rightarrow\R$ is continuously differentiable. Whether \eqref{eq:S} can be rendered invariant depends on whether $h$ is a CBF.

\begin{definition}[\cite{AmesTAC17}]\label{def:cbf}
    A continuously differentiable function $h\,:\,\R^n\rightarrow\R$ defining a set $\mathcal{S}\subset\R^n$ as in \eqref{eq:S} is said to be a control barrier function for \eqref{eq:control-affine} on the set $\mathcal{S}$ if there exists a $\mathcal{K}$ function $\alpha\,:\,\R_{\geq 0}\rightarrow\R$ such that:
    \begin{equation}\label{eq:CBF}
        \sup_{\bu\in\real^m}\left\{\pdv{h}{\bx}(\bx)\bf(\bx) + \pdv{h}{\bx}(\bx)\bg(\bx)\bu \right\} \geq - \alpha(h(\bx)),~ \forall\bx\in\mathcal{S}.
    \end{equation}
     Given an open set $\Dc \supset \Sc$, $h$ is a CBF on $\Dc$ if there exists an extended class $\Kc$ function $\alpha$ such that~\eqref{eq:CBF} holds for $\bx\in\Dc$.
\end{definition}

CBFs help certify the existence of $\bx\to\bk(\bx)$ satisfying:
\begin{equation}\label{eq:CBF-condition}
    \pdv{h}{\bx}(\bx)\bf(\bx) + \pdv{h}{\bx}(\bx)\bg(\bx)\bk(\bx) \geq -\alpha (h(\bx)),
\end{equation}
which can be imposed either on $\Sc$ or a larger domain $\Dc\supset\Sc$, each offering distinct advantages. When imposed only on $\Sc$, 
% the resulting condition 
\eqref{eq:CBF-condition}
admits a direct relationship with forward invariance. When $\Sc$ is \emph{well-posed}, 
% which in this setting corresponds to 
in that
$0$ is a regular value of $h$, forward invariance of $\Sc$ is equivalent to~\eqref{eq:CBF-condition} holding on~$\partial \Sc$, under the assumption that the closed-loop vector field $\bF$ is locally Lipschitz. When \eqref{eq:CBF-condition} is imposed on a larger domain $\Dc \supset \Sc$, the Lipschitz assumption can be dropped, and asymptotic stability of $\Sc$ is also guaranteed
%~\cite[Thm. 2]{GlotfelterLCSS17} 
\cite{KondaLCSS21}.

Although Definition~\ref{def:cbf} assumes that $\Sc$ is defined as the $0$-superlevel set of a single continuously differentiable function, in practice the safe set is often defined through a finite collection of functions $\{h_i\}_{i=1}^N$, where $N\in\mathbb N$, i.e.:
\begin{align*}
    \Sc = \setdef{\bx\in\real^n}{h_i(\bx) \geq 0, \ i\in[N]}.
\end{align*}
In this case, one seeks to design a controller that satisfies the CBF conditions associated with all of the $h_i$.

However, there are cases where the safe set is defined by an \emph{infinite} number of continuously differentiable functions. This arises, for instance, in \emph{backup CBFs}~\cite{gurriet2020scalable,GurrietICCPS18,YuxiaoCDC21,TamasACC23}, where a conservative safe set
% (equipped with a backup controller that renders it safe)
is expanded to a larger safe set by leveraging the flow of the system under the backup controller,
in addition to other cases previously mentioned \cite{Thirugnanam22,DeanCoRL20}.

In this paper, we tackle this scenario by considering a compact set $\Tc\subset\real^p$, a continuously differentiable function $h:\Tc\times\real^n\to\real$, and a safe set defined as: 
\begin{align}\label{eq:safe-set-infinite-constraints}
    \Sc = \setdef{\bx\in\real^n}{h(\tau,\bx) \geq 0, \forall \tau \in \Tc}.
\end{align}

Although various papers,
% in the literature, 
particularly those on backup CBFs, routinely search for controllers for~\eqref{eq:safe-set-infinite-constraints} by imposing: 
% the inequalities:
\begin{align}\label{eq:cbf-conditions-infinite}
    \frac{\partial h}{\partial \bx}(\tau,\bx) (\bf(\bx)+\bg(\bx)\bu) \geq -\alpha(h(\tau,\bx)), \ \forall \tau\in\Tc,
\end{align}
at every state $\bx$, the presence of infinite constraints in~\eqref{eq:cbf-conditions-infinite} complicates the control design process, the regularity conditions for invariance of $\mathcal{S}$, as well as the analysis of feasibility and safety guarantees of controllers derived from~\eqref{eq:cbf-conditions-infinite}.

In this paper, we seek to formalize CBF theory for safe sets described as in~\eqref{eq:safe-set-infinite-constraints}. In particular, we aim to establish:
\begin{enumerate}
    \item Necessary and sufficiency for forward invariance;
    \item A notion of CBF and optimal decay CBF (OD-CBF) \cite{ZengACC21,PioCDC25} tailored to~\eqref{eq:safe-set-infinite-constraints};
    \item Conditions under which the resulting CBF-based controllers enjoy desirable regularity properties and guarantee forward invariance of the set~\eqref{eq:safe-set-infinite-constraints};
    \item Conditions under which the control design process can be reduced to a finite number of CBF conditions, hence enabling the use of standard optimization solvers.
\end{enumerate}

\section{Invariance Under Infinite Constraints}\label{sec:infinite-invariance}
Motivated by the observation that forward invariance of a set defined by a single constraint~\eqref{eq:S} can be characterized by conditions imposed on its boundary, this section investigates the structure of the boundary for safe sets defined by infinitely many constraints. This characterization enables us to establish necessary and sufficient conditions for forward invariance by imposing barrier conditions directly on $\Sc$.

In the single-constraint case~\eqref{eq:S}, Nagumo's Theorem \cite{nagumo1942lage}, \cite[Ch. 4]{Blanchini}, \cite[Ch. 4]{AbrahamMarsdenRatiu} is widely cited to relate the condition:
\begin{equation}\label{eq:scalar-boundary}
     \pdv{h}{\bx}(\bx)\bF(\bx) \geq 0,~\forall \bx\in\partial\Sc,
\end{equation}
to invariance of $\mathcal{S}$. Yet \eqref{eq:scalar-boundary} is only meaningful for invariance when the boundary of this set is ``well-posed" in a certain sense. Here, 
one typically assumes that zero is a regular value of $h$, which is equivalent to assuming that for each $\bx\in\partial\mathcal{S}$ there exists a direction $\bv\in\real^n$ such that:
\begin{equation}\label{eq:regular-value}
    \pdv{h}{\bx}(\bx) \bv > 0.
\end{equation}
% for each $\bx\in\partial\mathcal S$, i.e., whenever $h(\bx)=0$. 
This implies there exists a direction along which $h$ is strictly increasing at the set boundary. Under this assumption, $\mathcal{S}$ can be rendered forward invariant under single integrator dynamics (full control of all states).
More importantly, this allows to establish that, for the original system \eqref{eq:closed-loop}, $\mathcal{S}$ is forward invariant if and only if \eqref{eq:scalar-boundary} holds \cite[Ex. 4.1.29]{AbrahamMarsdenRatiu}.

We now extend these conditions to 
% the case where
$\mathcal{S}$ 
% is 
defined by infinitely many constraints~\eqref{eq:safe-set-infinite-constraints}. An important object in this setting is the set of parameters corresponding to the active constraints:
\begin{equation}
    {\rm Act}(\bx) = \{\tau \in {\cal T} \;|\; h(\tau,\bx)=0\},
\end{equation}
which allows us to define the boundary of $\mathcal{S}$ as:
\begin{equation}
    \partial\mathcal{S} = \{\bx\in\mathcal{S}\mid {\rm Act}(\bx) \neq \emptyset\}.
\end{equation}
The following assumption extends \eqref{eq:regular-value} to infinite constraints.

\begin{assumption}[Well-posedness]\label{assump:well-posed}
    For each $\bx\in\partial\mathcal{S}$, there exists $\bv\in\R^n$ such that:
    \begin{equation}\label{eq:well-posed}
        \pdv{h}{\bx}(\tau,\bx)  \bv > 0,\quad \forall \tau\in{\rm Act}(\bx).
    \end{equation}
\end{assumption}

Assumption \ref{assump:well-posed} is similar to the
practical set assumption from \cite[Ch. 4]{Blanchini}, and leveraged in \cite{GurrietICCPS18,gurriet2020scalable}. To derive invariance conditions we follow the proof strategy in~\cite[Ch.~4.2.1]{Blanchini}, which establishes connections between forward invariance and boundary inequalities for sets defined by finitely many constraints. There, the proof requires the existence of a locally Lipschitz $\bx \mapsto \bphi(\bx)$ satisfying~\eqref{eq:well-posed}, in contrast to the pointwise existence of 
% a direction 
$\bv$ required in Assumption~\ref{assump:well-posed}. To bridge this gap, we provide the following lemma, illustrating that such a function $\bphi$ can be constructed under Assumption~\ref{assump:well-posed}.

\begin{lemma}[Smooth ascent direction on boundary]
\label{lem:practical}
    Consider the set $\Sc$ defined by an infinite number of inequality constraints as in~\eqref{eq:safe-set-infinite-constraints} with a function $h$ continuous in $\tau$ and continuously differentiable in $\bx$. If $\cal T$ is compact and Assumption~\ref{assump:well-posed} holds, then there exists a $C^\infty$ function $\bphi:\R^n \to \R^n$ such that for each $\bx\in\partial\Sc$:
    \begin{equation}\label{eq:smooth_phi}
    \pdv{h}{\bx}(\tau,\bx)  \bphi(\bx) > 0,
    \end{equation}
    for all $\tau \in {\rm Act}_{\epsilon(\bx)}(\bx)$ where 
    \begin{equation}\label{eq:Act-eps}
        {\rm Act}_{\epsilon(\bx)}(\bx) = \{\tau \in {\cal T} \;|\; h(\tau,\bx)\leq\epsilon(\bx)\},
    \end{equation}
    with a sufficiently small positive function $\epsilon : \R^n \to \R_{>0}$.
\end{lemma}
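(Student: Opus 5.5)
The plan is to first turn the pointwise direction of Assumption~\ref{assump:well-posed} into a locally valid direction, and then glue these local directions with a smooth partition of unity.

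\emph{Local step.} Fix $\bx_0\in\partial\Sc$ and let $\bv_0$ be the direction from Assumption~\ref{assump:well-posed}, so that $\pdv{h}{\bx}(\tau,\bx_0)\bv_0>0$ for all $\tau\in{\rm Act}(\bx_0)$. The active set ${\rm Act}(\bx_0)$ is compact, being a closed subset of the compact set $\Tc$ (it is closed because $h$ is continuous in $\tau$). We will use joint continuity of $(\tau,\bx)\mapsto h$ and $(\tau,\bx)\mapsto \pdv{h}{\bx}$, which holds since $h$ is $C^1$ on $\Tc\times\real^n$ as stated in the setup. By compactness, there is a margin $c_0>0$ and an open neighborhood $W$ of ${\rm Act}(\bx_0)$ in $\Tc$ with $\pdv{h}{\bx}(\tau,\bx_0)\bv_0\geq 2c_0$ on $W$. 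The set $\Tc\setminus W$ is compact and contains no active index, so $h(\tau,\bx_0)\geq 3\delta_0$ there for some $\delta_0>0$.

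Uniform continuity on $\Tc\times \bar B(\bx_0,1)$ then gives a radius $r_0>0$ such that, for all $\bx\in B(\bx_0,r_0)$:
\begin{itemize}
\item $\pdv{h}{\bx}(\tau,\bx)\bv_0\geq c_0$ for all $\tau\in W$;
\item $h(\tau,\bx)\geq 2\delta_0$ for all $\tau\in\Tc\setminus W$.
\end{itemize}
Consequently, for $\bx\in B(\bx_0,r_0)$ and any $\epsilon\leq\delta_0$, every $\tau$ with $h(\tau,\bx)\leq\epsilon$ lies in $W$, and hence satisfies $\pdv{h}{\bx}(\tau,\bx)\bv_0>0$. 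In short, $\bv_0$ is a uniform ascent direction for the $\delta_0$-nearly-active constraints on the ball $B(\bx_0,r_0)$.

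\emph{Gluing.} The balls $B(\bx_0,r_0)$, $\bx_0\in\partial\Sc$, together with the open set $U_\infty=\real^n\setminus\partial\Sc$, cover $\real^n$. Here $\partial\Sc$ as defined is closed: if $\bx_k\to\bx$ with $h(\tau_k,\bx_k)=0$ and $\bx_k\in\Sc$, then $\bx\in\Sc$, and a limit point of $(\tau_k)$ in $\Tc$ is active at $\bx$. Take a $C^\infty$ partition of unity $\{\psi_j\}$ subordinate to this cover that is locally finite, and set
\begin{equation*}
\bphi(\bx)=\sum_j \psi_j(\bx)\bv_j,
\end{equation*}
where $\bv_j$ is the local direction of the ball supporting $\psi_j$, and $\bv_j=0$ for functions supported in $U_\infty$. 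Then $\bphi$ is $C^\infty$.

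\emph{Choice of $\epsilon$.} For $\bx\in\partial\Sc$, only finitely many $\psi_j$ are nonzero at $\bx$, and none of them is supported in $U_\infty$. Define $\epsilon(\bx)$ as the minimum of the $\delta_j$ over these finitely many indices; off $\partial\Sc$, set $\epsilon(\bx)=1$. Then for $\tau\in{\rm Act}_{\epsilon(\bx)}(\bx)$ and each such $j$, we have $\pdv{h}{\bx}(\tau,\bx)\bv_j>0$. A convex combination of positive numbers with weights summing to one is positive, so \eqref{eq:smooth_phi} holds. The function $\epsilon$ is positive as required; the statement does not ask for continuity of $\epsilon$, though a continuous lower bound could be obtained by a further smoothing step if needed later.

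\emph{Main obstacle.} The key difficulty is the local step. One must use compactness of $\Tc$ twice: first to obtain a uniform positive margin on a neighborhood $W$ of the active set, and then to ensure that the non-active indices stay uniformly away from zero under small perturbations of $\bx$. This is precisely what allows passing from ${\rm Act}(\bx)$ to the enlarged set ${\rm Act}_{\epsilon}(\bx)$ uniformly on a ball. The gluing is standard once this uniformity is in place, because the condition ``$\pdv{h}{\bx}\bv>0$ for all $\tau$ in a set'' is convex in $\bv$.
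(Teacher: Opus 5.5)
Your proof is correct and follows essentially the same route as the paper. Both use compactness of $\mathcal{T}$ to get a positive margin on a neighborhood of ${\rm Act}(\bx_0)$ and a positive lower bound on $h$ off that neighborhood, then propagate this to nearby states, then glue with a smooth partition of unity using linearity in $\bv$. The differences are bookkeeping refinements: you build the $\epsilon$-enlargement into the local step, so $\epsilon(\bx)$ is explicitly a minimum of finitely many $\delta_j$, rather than re-running the compactness argument on $\bphi(\bx)$ after gluing as the paper does. You add $\real^n\setminus\partial\Sc$ to the cover (after checking that $\partial\Sc$ is closed), so $\bphi$ is defined on all of $\real^n$ rather than only on a neighborhood of $\partial\Sc$. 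You also make explicit the joint continuity of $\pdv{h}{\bx}$ in $(\tau,\bx)$ that the paper's step ``uniformly continuous in $\tau$ and continuous in $\bx$'' tacitly requires.
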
 
\begin{proof}
    Denote $\bx\mapsto\bv(\bx)$ as the corresponding control at each $\bx\in\partial \Sc$ given by Assumption~\ref{assump:well-posed}. For each $\bx\in\partial \Sc$, let: $$\epsilon_1(\bx) \coloneqq \min_{\tau\in{\rm Act}(\bx)} \pdv{h}{\bx}(\tau,\bx) \bv(\bx) > 0,$$ be the minimum value achieved over the compact set ${\rm Act}(\bx)\subseteq {\cal T}$, which exists due to continuity of $\pdv{h}{\bx}$ in $\tau$. In addition, since $\cal T$ is compact, the continuity of the function with respect to $\tau$ is uniform, and therefore, there exists a sufficiently small open neighborhood $\Nc \supset {\rm Act}(\bx)$ such that:
    $$
    \pdv{h}{\bx}(\tau,\bx)  \bv(\bx) > \epsilon_1(\bx)/2 > 0
    $$
    for all $\tau \in \Nc$. Noting that $\Nc$ is open, the set difference ${\cal T} \setminus \Nc$ is compact. Then recalling that $h(\tau,\bx)\geq 0$ for all $\tau\in{\cal T}$ at $\bx\in\partial\Sc$ and $h(\tau,\bx)\neq 0$ for $\tau \not\in\Nc$,  the minimum value  $\epsilon_2(\bx) :=\min_{{\cal T} \setminus \Nc} h(\tau,\bx)>0$ is achieved with a strictly positive value. Consequently, the inequality above also holds for all $\tau$ in the subset ${\rm Act}_{\epsilon_2(\bx)}(\bx)\subseteq \Nc$. 

    Since $\pdv{h}{\bx}$ is uniformly continuous in $\tau$ and continuous in $\bx$, there exists a small neighborhood ${\cal W}(\bx)$ of $\bx$ such that:
    $$
    \pdv{h}{\bx}(\tau,\bx')  \bv(\bx) > 0,
    $$
    holds for all $\tau\in{\rm Act}_{\epsilon_2(\bx)}(\bx)$ at each $\bx'\in {\cal W}(\bx)$, with the input $\bv(\bx)$ given at $\bx$. We then note that for $\tau\not \in {\rm Act}_{\epsilon_2(\bx)}(\bx)$, i.e., $h(\tau,\bx)> \epsilon_2(\bx)$, there exists a sufficiently small neighborhood ${\cal W}_0(\bx)$ of $\bx$ such that $\tau\not \in {\rm Act}(\bx')$, i.e., $h(\tau,\bx')\neq 0$, for all $\bx'\in{\cal W}_0(\bx)$. In other words,  the neighborhood ${\cal W}_0(\bx)$ is such that  ${\rm Act}(\bx') \subseteq {\rm Act}_{\epsilon_2(\bx)}(\bx)$ for all $\bx' \in {\cal W}_0(\bx)$ by contrapositive. Thus, by selecting ${\cal W}(\bx)\subset \mathcal{W}_0(\mathbf{x})$ small enough, the inequality above holds for all $\tau\in{\rm Act}(\bx')$ at each $\bx'\in {\cal W}(\bx)$. With this newly defined neighborhood ${\cal W}(\bx)$, we next construct a $C^\infty$ function $\bphi$ from the pointwise-defined $\bv(\bx)$.

    The collection $\{{\cal W}(\bx)\}_{\bx\in\partial\Sc}$ is an open cover for a neighborhood ${\cal Q} \supset\partial {\cal S}$ of the boundary of the set $\cal S$. Since ${\cal Q}\subset \R^n$ is embedded in a Euclidean space, which is a differentiable manifold, there exists a countable $C^\infty$ partition of unity $\{\psi_j\}_{j\in\mathbb{N}}$, corresponding to a countable collection of sets $\{{\cal W}(\bx_j)\}_{j\in\mathbb{N}}$, subordinate to the cover, cf.~\cite[Theorem 1.11]{warner89}. The function $\bphi(\bx)= \sum_{j\in\mathbb{N}} \psi_j(\bx)\bv(\bx_j)$ defined from these partition of unity is also $C^\infty$ function. In addition, from convexity (linearity) of the function $\pdv{h}{\bx}(\tau,\bx) \cdot \bv$ in $\bv$, the constructed function $\bphi$ satisfies \eqref{eq:smooth_phi} at each $\bx\in\partial{\cal S}$ for all $\tau \in {\rm Act}(\bx)$. Similar to the argument at the beginning of the proof, from compactness of $\cal T$, we conclude that inequality~\eqref{eq:smooth_phi} holds for all $\tau \in {\rm Act}_{\epsilon(\bx)}(\bx)$ for a sufficiently small positive function $\epsilon : \R^n \to \R_{>0}$. 
\end{proof}

With Lemma~\ref{lem:practical}, Assumption~\ref{assump:well-posed} allows us to meaningfully characterize forward invariance of $\Sc$ through Lie-derivative-type conditions imposed on its boundary, extending the single-constraint characterization~\eqref{eq:scalar-boundary} to the infinite-constraint setting. We make this precise in the following theorem.

\begin{theorem}[Forward Invariance]\label{thm:infinite-nagumo}
    Consider the autonomous system~\eqref{eq:closed-loop} and the set $\Sc$ defined with an infinite number of inequality constraints as in~\eqref{eq:safe-set-infinite-constraints}, where $h$ is continuous in $\tau$ and continuously differentiable in $\bx$. Suppose $\cal T$ is compact, Assumption~\ref{assump:well-posed} holds, and $\bF$ is locally Lipschitz. Then, $\cal S$ is forward invariant for~\eqref{eq:closed-loop} if and only if, for each $\bx\in\partial{\cal S}$:
    \begin{equation}\label{eq:infinite-invariance-condition}
        \pdv{h}{\bx}(\tau,\bx)\bF(\bx) \geq 0,~\forall \tau\in{\rm Act}(\bx).
    \end{equation}
\end{theorem}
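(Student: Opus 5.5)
We prove the two directions separately. The forward direction is elementary. The converse uses the perturbation argument of \cite[Ch.~4.2.1]{Blanchini}, with Lemma~\ref{lem:practical} supplying the locally Lipschitz ascent field that the argument needs.

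\emph{Necessity.} Suppose $\Sc$ is forward invariant. Fix $\bx\in\partial\Sc$ and $\tau\in{\rm Act}(\bx)$. Let $\bx(t)$ be the unique solution of \eqref{eq:closed-loop} with $\bx(0)=\bx$. Invariance gives $h(\tau,\bx(t))\geq 0=h(\tau,\bx)$ for all small $t>0$. Since $h(\tau,\cdot)$ is $C^1$, we have
\[
\pdv{h}{\bx}(\tau,\bx)\bF(\bx)=\lim_{t\to0^+}\frac{h(\tau,\bx(t))-h(\tau,\bx)}{t}\geq0,
\]
which is \eqref{eq:infinite-invariance-condition}.

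\emph{Sufficiency.} Take the $C^\infty$ field $\bphi$ and the function $\epsilon(\cdot)$ from Lemma~\ref{lem:practical}. For $\delta>0$, define the perturbed field $\bF_\delta=\bF+\delta\bphi$. It is locally Lipschitz, so solutions exist, are unique, and depend continuously on $\delta$ on compact time intervals. We first show that $\Sc$ is forward invariant for every $\bF_\delta$ with $\delta>0$, and then let $\delta\to0$.

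To show invariance for fixed $\delta>0$, argue by contradiction. Suppose a trajectory $\bx_\delta(\cdot)$ starting in $\Sc$ leaves $\Sc$. Let $t^*=\sup\{t: \bx_\delta([0,t])\subset\Sc\}$. Since $\Sc$ is closed ($\Tc$ compact, $h$ continuous), $\bx^*=\bx_\delta(t^*)\in\partial\Sc$. Moreover, there are times $t_k\downarrow t^*$ and parameters $\tau_k$ with $h(\tau_k,\bx_\delta(t_k))<0$. By compactness of $\Tc$, pass to a subsequence with $\tau_k\to\bar\tau$; then $\bar\tau\in{\rm Act}(\bx^*)$.

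The key estimate is a uniform strict decrease bound near $\bx^*$. At $\bx^*$, every $\tau\in{\rm Act}_{\epsilon(\bx^*)}(\bx^*)$ satisfies
\[
\pdv{h}{\bx}(\tau,\bx^*)\bF_\delta(\bx^*)\geq \pdv{h}{\bx}(\tau,\bx^*)\bF(\bx^*)+\delta\,\pdv{h}{\bx}(\tau,\bx^*)\bphi(\bx^*).
\]
Two gaps remain before this bound is usable.
\begin{enumerate}
\item Condition \eqref{eq:infinite-invariance-condition} holds only on ${\rm Act}(\bx^*)$, whereas the $\tau_k$ are merely near-active. Handle this by continuity of $\pdv{h}{\bx}$ jointly in $(\tau,\bx)$ plus compactness. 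The set $K=\{\tau: \pdv{h}{\bx}(\tau,\bx^*)\bF_\delta(\bx^*)\leq 0\}$ is compact and disjoint from ${\rm Act}(\bx^*)$, where the strict inequality holds. Hence $\min_K h(\tau,\bx^*)>0$ if $K\neq\emptyset$.
\item By continuity, there is a neighborhood $U$ of $\bx^*$ and $\eta>0$ such that $\pdv{h}{\bx}(\tau,\bx')\bF_\delta(\bx')>0$ for all $\bx'\in U$ and all $\tau$ with $h(\tau,\bx^*)<\eta$.
\end{enumerate}
Now apply the mean value theorem along $[t^*,t_k]$. For large $k$, $h(\tau_k,\bx^*)<\eta$, and $h(\tau_k,\bx_\delta(\cdot))$ is strictly increasing on $[t^*,t_k]$. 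Hence $h(\tau_k,\bx_\delta(t_k))>h(\tau_k,\bx^*)\geq0$, a contradiction. I expect this uniformity step in $\tau$ to be the main obstacle: one must make the neighborhood independent of which nearly-active $\tau$ is violated. Joint continuity and compactness of $\Tc$ provide exactly this, as in the proof of Lemma~\ref{lem:practical}.

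\emph{Limit $\delta\to0$.} Fix $\bx_0\in\Sc$ and $T>0$ within the existence interval of the nominal solution. Solutions of $\bF_\delta$ converge uniformly on $[0,T]$ to the solution of $\bF$ as $\delta\to0$, by Gr\"onwall on a compact neighborhood. Since $\bx_\delta(t)\in\Sc$ for all $\delta>0$ and $\Sc$ is closed, the limit $\bx(t)$ lies in $\Sc$. Thus $\Sc$ is forward invariant for \eqref{eq:closed-loop}.
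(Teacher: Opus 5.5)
Your proof is correct and follows the paper's route. Necessity comes from differentiating $h(\tau,\bx(t))$ at $t=0$ for an active $\tau$. Sufficiency comes from showing invariance for the perturbed field $\bF+\delta\bphi$ built from Lemma~\ref{lem:practical}, then letting $\delta\to 0$ via continuous dependence on parameters. Your first-exit-time contradiction, where the violated parameters $\tau_k$ accumulate at an active $\bar\tau$ and so lie in the near-active region where $h$ increases uniformly, is a tidier form of the paper's split into near-active and far-from-active parameters. It also makes explicit the step the paper leaves as ``repeating this argument.''
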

\begin{proof}
We follow a similar argument to that of \cite[Ch. 4.2.1]{Blanchini}.
\textbf{Sufficiency.} Consider the perturbed system:
\begin{equation}\label{eq:perturbed-dyn}
    \dot{\bx} = \bF_{\delta}(\bx) \coloneqq \bF(\bx) + \delta \bphi(\bx),
\end{equation}
where $\delta>0$ and $\bphi\,:\,\R^n\rightarrow\R^n$ is a $C^{\infty}$ vector field satisfying \eqref{eq:smooth_phi}, whose existence is guaranteed by Lemma~\ref{lem:practical}. Since $\bF$ is locally Lipschitz and $\bphi$ is smooth, $\bF_{\delta}$ is locally Lipschitz and admits a unique solution $t\mapsto\bx_{\delta}(t)$ from each initial condition $\bx_0$ defined on some maximal interval of existence. We will first show that $\mathcal{S}$ is forward invariant for the perturbed system \eqref{eq:perturbed-dyn}. 
To do so, let $\bx_0\in\mathcal{S}$ and suppose there exists a time $t_1\geq0$ such that $\bx_{\delta}(t_1)\in\partial\mathcal{S}$; if such a $t_1$ does not exist, then $\bx_{\delta}(t)\in\mathcal{S}$ for all $t$ since continuous solutions can only leave $\mathcal{S}$ through the boundary.
At $\bx_1\coloneqq \bx_{\delta}(t_1)\in\partial \mathcal{S}$, it follows from \eqref{eq:smooth_phi} and \eqref{eq:infinite-invariance-condition} that:
\begin{equation}
    \begin{aligned}
        \dot{h}(\tau,\bx_1) = & \pdv{h}{\bx}(\tau,\bx_1)\bF_{\delta}(\bx_1) \\ 
        =& \underbrace{\pdv{h}{\bx}(\tau,\bx_1)\bF(\bx_1)}_{\geq 0} + 
        \underbrace{\delta\pdv{h}{\bx}(\tau,\bx_1)\bphi(\bx_1)}_{>0} > 0,
    \end{aligned}
\end{equation}
for each 
$\tau\in\mathrm{Act}(\bx_1)$.
Since $\tau\mapsto\dot{h}(\tau,\cdot)$ is continuous and $\mathrm{Act}(\bx_1)\subseteq\mathcal{T}$ is compact, the continuity of $\dot{h}$ with respect to $\tau$ is uniform and there exists $\epsilon_1>0$ such that:
\begin{equation}
    \dot{h}(\tau,\bx_1) = \dot{h}(\tau,\bx_{\delta}(t_1)) \geq \epsilon_1 > 0,~\forall \tau\in\mathrm{Act}(\bx_1).
\end{equation}
Further, since $t\mapsto \dot{h}(\tau,\bx_{\delta}(t))$ is continuous, and the continuity of $\dot{h}$ with respect to $\tau$ is uniform, there exists $c_1>0$ such that for $t\in[t_1,t_1 + c_1)$:
\begin{equation}
    \dot{h}(\tau,\bx_{\delta}(t)) \geq \frac{\epsilon_1}{2} > 0,~\forall \tau\in\mathrm{Act}(\bx_1).
\end{equation}
Using the same argument as in the proof of Lemma \ref{lem:practical}, we construct a neighborhood $\mathcal{E}\supset\mathrm{Act}(\bx_1)$ and constant $\epsilon_2\coloneqq\min_{\mathcal{T}\setminus\mathcal{E}}h(\tau,\bx_1)>0$ such that for $t\in[t_1,t_1 + c_1)$:
\begin{equation}
    \dot{h}(\tau,\bx_{\delta}(t)) \geq \frac{\epsilon_1}{4} > 0,~\forall \tau\in\mathrm{Act}_{\epsilon_2}(\bx_1).
\end{equation}
Hence, $t\mapsto h(\tau,\bx_{\delta}(t))$ is strictly increasing, uniformly in $\tau\in\mathrm{Act}_{\epsilon_2}(\bx_1)$, on $[t_1,t_1 + c_1)$, which implies that:
\begin{equation}
    h(\tau,\bx_{\delta}(t))\geq h(\tau,\bx_{\delta}(t_1)) = 0,~\forall \tau\in\mathrm{Act}_{\epsilon_2}(\bx_1),
\end{equation}
for $t\in[t_1,t_1 + c_1)$.
We now show that $h(\tau,\bx_{\delta}(t))\geq0$ for $\tau\in\mathcal{T}\setminus \mathrm{Act}_{\epsilon_2}(\bx_1)$ on a right interval of $t_1$. From \eqref{eq:Act-eps}:
\begin{equation}
    h(\tau,\bx_1) > \epsilon_2 > 0,\quad \forall \tau\in\mathcal{T}\setminus \mathrm{Act}_{\epsilon_2}(\bx_1).
\end{equation}
Using the uniform continuity of $h$ 
with respect to $\tau\in\mathcal{T}\setminus \mathrm{Act}_{\epsilon_2}(\bx_1)$
and the strictness of the above inequality, it follows that there exists $c_2>0$ such that for each $\tau\in\mathcal{T}\setminus \mathrm{Act}_{\epsilon_2}(\bx_1)$ we have:
\begin{equation}
    h(\tau,\bx_{\delta}(t)) > \frac{\epsilon_2}{2} > 0,~ \forall t\in[t_1, t_1 + c_2).
\end{equation}
Taking $c\coloneqq \min\{c_1,c_2\}$, we thus have that for all $\tau\in\mathcal{T}$:
\begin{equation}
    h(\tau,\bx_{\delta}(t)) \geq0,~ \forall t \in [t_1, t_1 + c),
\end{equation}
thereby showing that $\bx_{\delta}(t)\in\mathcal{S}$ for all $t\in[t_1,t_1 + c)$.
Repeating this argument for any time $t$ at which $\bx_{\delta}(t)\in\partial\mathcal{S}$ shows that solutions must always remain in $\mathcal{S}$ for a nonempty interval of time after reaching the boundary, implying that $\bx_{\delta}(t)\in\mathcal{S}$ for all $t$ in the interval of existence provided that $\bx_0\in\mathcal{S}$, i.e., $\mathcal{S}$ is forward invariant for the perturbed system \eqref{eq:perturbed-dyn} for any $\delta>0$.
The remainder of the sufficiency proof follows the same argument as that of \cite[Ch. 4.2.1]{Blanchini} by leveraging the fact that solutions of the perturbed system converge uniformly to those of the unperturbed system on any compact time interval \cite[Thm. 3.5]{Khalil}.

\textbf{Necessity.} 
We now show that if $\mathcal{S}$ is forward invariant for $\bF$, then \eqref{eq:infinite-invariance-condition} must hold. Pick an initial condition $\bx_0\in\partial \mathcal{S}$. Since $\mathcal{S}$ is forward invariant, we have $\bx(t)\in\mathcal{S}$ for all $t \in \mathcal{I}(\bx_0)$, which implies that for all $t \in \mathcal{I}(\bx_0)$ we have $h(\tau,\bx(t)) \geq 0$ for all $\tau\in\mathcal{T}$. Pick any $\tau^*\in\mathrm{Act}(\bx_0)$, define $\rho(t)=h(\tau^*,\bx(t))$, and note that:
\begin{equation*}
    \dot{\rho}(t) = \pdv{h}{\bx}(\tau^*,\bx(t))\bF(\bx(t))
\end{equation*}
Now, since $\rho(0)=0$ and $\rho(t)\geq0$ for all $t\in \mathcal{I}(\bx_0)$ we must have $\dot{\rho}(0)\geq 0$, implying that:
\begin{equation*}
    \pdv{h}{\bx}(\tau^*,\bx_0)\bF(\bx_0) \geq 0.
\end{equation*}
Since the choices of $\bx_0\in\partial\mathcal{S}$ and $\tau^*\in \mathrm{Act}(\bx_0)$ were arbitrary, the above holds for any $\bx_0\in\partial\mathcal{S}$ and any $\tau \in \mathrm{Act}(\bx_0)$, implying that \eqref{eq:infinite-invariance-condition} holds, as desired. 
\end{proof}

Theorem \ref{thm:infinite-nagumo} extends invariance conditions based on Nagumo's Theorem \cite[Ch. 4]{Blanchini} to the infinite constraint setting. We now bridge the gap between these necessary and sufficient conditions for set invariance and barrier functions.

\begin{corollary}
    Let the assumptions of Theorem \ref{thm:infinite-nagumo} hold. If there exists a class $\mathcal{K}$ function $\alpha$ such that for each $\bx\in\mathcal{S}$:
    \begin{equation}\label{eq:infinite-barrier}
        \pdv{h}{\bx}(\tau,\bx)\bF(\bx) \geq - \alpha(h(\tau, \bx)),\quad \forall \tau\in\mathcal{T},
    \end{equation}
    then $\mathcal{S}$ is forward invariant. 
\end{corollary}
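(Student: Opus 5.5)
The plan is to reduce the corollary directly to the sufficiency direction of Theorem~\ref{thm:infinite-nagumo}. All of the theorem's standing hypotheses are in force: $\mathcal{T}$ is compact, Assumption~\ref{assump:well-posed} holds, $\bF$ is locally Lipschitz, and $h$ is continuous in $\tau$ and continuously differentiable in $\bx$. So it suffices to verify the boundary condition~\eqref{eq:infinite-invariance-condition}, namely that $\pdv{h}{\bx}(\tau,\bx)\bF(\bx)\geq 0$ for every $\bx\in\partial\mathcal{S}$ and every $\tau\in{\rm Act}(\bx)$.

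To obtain this, fix $\bx\in\partial\mathcal{S}$. By definition $\partial\mathcal{S}\subseteq\mathcal{S}$, so the hypothesis~\eqref{eq:infinite-barrier} applies at $\bx$ for all $\tau\in\mathcal{T}$. In particular it applies for each $\tau\in{\rm Act}(\bx)\subseteq\mathcal{T}$. For such $\tau$ we have $h(\tau,\bx)=0$, and since $\alpha$ is class $\mathcal{K}$, $\alpha(0)=0$. Substituting into~\eqref{eq:infinite-barrier} gives
\begin{equation*}
\pdv{h}{\bx}(\tau,\bx)\bF(\bx)\geq -\alpha(h(\tau,\bx)) = -\alpha(0) = 0,
\end{equation*}
which is exactly~\eqref{eq:infinite-invariance-condition}. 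Invoking the sufficiency part of Theorem~\ref{thm:infinite-nagumo} then yields forward invariance of $\mathcal{S}$.

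There is no genuine obstacle; the work is already done in Theorem~\ref{thm:infinite-nagumo} and Lemma~\ref{lem:practical}. The only point to keep in mind is that $\alpha$ is evaluated only at the nonnegative values $h(\tau,\bx)\geq 0$ for $\bx\in\mathcal{S}$, so its domain $\R_{\geq0}$ suffices. The only properties of $\alpha$ used are that it is defined there and that $\alpha(0)=0$; monotonicity is not needed.
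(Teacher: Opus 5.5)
Your proposal is correct and matches the paper's proof. Both restrict \eqref{eq:infinite-barrier} to $\bx\in\partial\mathcal{S}$ and $\tau\in\mathrm{Act}(\bx)$, where $h(\tau,\bx)=0$ makes the right-hand side vanish, which recovers \eqref{eq:infinite-invariance-condition}; both then invoke the sufficiency direction of Theorem~\ref{thm:infinite-nagumo}. Your explicit remarks that $\partial\mathcal{S}\subseteq\mathcal{S}$ under the paper's definition and that only $\alpha(0)=0$ is used are accurate and slightly sharpen the paper's one-line argument.
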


\begin{proof}
    When $\bx\in\partial \mathcal{S}$, we have $\mathrm{Act}(\bx)\neq\emptyset$ and $h(\tau,\bx)=0$ for each $\tau\in \mathrm{Act}(\bx)$. Since $\mathrm{Act}(\bx)\subset\mathcal{T}$, \eqref{eq:infinite-barrier} also holds for any $\tau\in\mathrm{Act}(\bx)$, so, when $\bx\in\partial\mathcal{S}$, \eqref{eq:infinite-barrier} reduces to \eqref{eq:infinite-invariance-condition}, implying forward invariance of $\mathcal{S}$ by Theorem \ref{thm:infinite-nagumo}.
\end{proof}

The existence of $\alpha\in\mathcal{K}$ satisfying \eqref{eq:infinite-barrier} is sufficient, but not necessary for forward invariance. This gap can be closed using the idea of an optimal decay barrier function \cite{ZengACC21,PioCDC25}.

\begin{definition}\label{def:OD-BF}
    A function $h\,:\,\R\times\R^n\rightarrow\R$, $h(\tau,\bx)$, continuous in $\tau$ and continuously differentiable in $\bx$, is said to be 
    an \emph{optimal-decay barrier function} (OD-BF) for \eqref{eq:closed-loop} on the set $\mathcal{S}$ as in \eqref{eq:safe-set-infinite-constraints} if for any $\alpha\in\mathcal{K}$ there exists a function $\theta\,:\,\mathcal{T} \times \R^n\rightarrow\R_{\geq0}$ such that for each $\bx\in \mathcal{S}$:
    \begin{equation}\label{eq:infinite-OD-BF}
        \pdv{h}{\bx}(\tau,\bx)\bF(\bx)\geq -\theta(\tau,\bx)\alpha(h(\tau,\bx)),\quad \forall \tau \in \mathcal{T}.
    \end{equation}
\end{definition}

The following result shows that under the assumptions of Theorem \ref{thm:infinite-nagumo}, the existence of an OD-BF is necessary and sufficient for set invariance.

\begin{proposition}\label{prop:infinite-OD-BF}
    Consider system \eqref{eq:closed-loop} and a set $\mathcal{S}$ as in \eqref{eq:safe-set-infinite-constraints} defined by a function $h\,:\,\R\times\R^n\rightarrow\R$.
    Suppose that $\bF$ is locally Lipschitz, $\mathcal{T}$ is compact, and Assumption \ref{assump:well-posed} holds. Then $\mathcal{S}$ is forward invariant if and only if $h$ is an OD-BF.
\end{proposition}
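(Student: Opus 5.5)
Both directions go through Theorem~\ref{thm:infinite-nagumo}, which, under the standing assumptions (compact $\mathcal{T}$, Assumption~\ref{assump:well-posed}, locally Lipschitz $\bF$), says that forward invariance of $\mathcal{S}$ is equivalent to the boundary condition \eqref{eq:infinite-invariance-condition}. So it suffices to show that $h$ is an OD-BF if and only if \eqref{eq:infinite-invariance-condition} holds on $\partial\mathcal{S}$.

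\textbf{Sufficiency.} Suppose $h$ is an OD-BF. Fix any $\alpha\in\mathcal{K}$ and take the corresponding $\theta\geq 0$ from Definition~\ref{def:OD-BF}. Let $\bx\in\partial\mathcal{S}$ and $\tau\in\mathrm{Act}(\bx)$. Then $h(\tau,\bx)=0$, so $\alpha(h(\tau,\bx))=\alpha(0)=0$. The right-hand side of \eqref{eq:infinite-OD-BF} is therefore $-\theta(\tau,\bx)\cdot 0=0$, whatever the value of $\theta$. This gives \eqref{eq:infinite-invariance-condition}, and Theorem~\ref{thm:infinite-nagumo} yields forward invariance.

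\textbf{Necessity.} Assume $\mathcal{S}$ is forward invariant, so by Theorem~\ref{thm:infinite-nagumo} the condition \eqref{eq:infinite-invariance-condition} holds on $\partial\mathcal{S}$. Fix an arbitrary $\alpha\in\mathcal{K}$. I would build $\theta$ pointwise on $\mathcal{T}\times\R^n$:
\begin{equation*}
\theta(\tau,\bx)\coloneqq
\begin{cases}
\max\left\{0,\,-\dfrac{\pdv{h}{\bx}(\tau,\bx)\bF(\bx)}{\alpha(h(\tau,\bx))}\right\}, & h(\tau,\bx)>0,\\[2mm]
0, & \text{otherwise}.
\end{cases}
\end{equation*}
This $\theta$ is nonnegative. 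The inequality \eqref{eq:infinite-OD-BF} is then checked by cases on $\bx\in\mathcal{S}$ and $\tau\in\mathcal{T}$.
\begin{itemize}
\item If $h(\tau,\bx)>0$, then $\alpha(h(\tau,\bx))>0$. Multiplying the definition of $\theta$ through by this positive quantity gives $\theta\,\alpha(h)\geq -\pdv{h}{\bx}\bF$, which is exactly \eqref{eq:infinite-OD-BF}.
\item If $h(\tau,\bx)=0$, then $\tau\in\mathrm{Act}(\bx)$, so $\bx\in\partial\mathcal{S}$. The requirement \eqref{eq:infinite-OD-BF} reduces to $\pdv{h}{\bx}(\tau,\bx)\bF(\bx)\geq 0$, which is \eqref{eq:infinite-invariance-condition}.
\end{itemize}
Since $\bx\in\mathcal{S}$ forces $h(\tau,\bx)\geq 0$ for every $\tau$, these two cases are exhaustive. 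Because $\alpha$ was arbitrary, $h$ is an OD-BF.

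The main point to watch is the regularity of $\theta$. Definition~\ref{def:OD-BF} asks only for some nonnegative function, so the construction above, which may blow up as $h\to 0^+$, is admissible. The substantive work is already contained in Theorem~\ref{thm:infinite-nagumo}, and the only delicate step is the case split at $h(\tau,\bx)=0$. If continuity or local boundedness of $\theta$ were required, a genuinely harder argument would be needed, using the well-posedness assumption and local Lipschitzness near the boundary. I would note this but not pursue it here.
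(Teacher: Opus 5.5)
Your proposal is correct and matches the paper's proof: you reduce both directions to the boundary condition \eqref{eq:infinite-invariance-condition} via Theorem~\ref{thm:infinite-nagumo}. One direction then follows from $\alpha(0)=0$, and the other from the pointwise construction $\theta=\mathrm{ReLU}(-\pdv{h}{\bx}\bF)/\alpha(h)$, with $\theta=0$ on active constraints. Your version is slightly cleaner in two places: you set $\theta=0$ where $h(\tau,\bx)<0$ (where $\alpha\in\mathcal{K}$ is undefined), and you restrict the boundary conclusion to $\tau\in\mathrm{Act}(\bx)$, whereas the paper's text says it holds for all $\tau\in\mathcal{T}$.
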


\begin{proof}
    The proof follows similar steps to that of \cite[Thm. 2]{PioCDC25}.
    We will show that \eqref{eq:infinite-invariance-condition} and \eqref{eq:infinite-OD-BF} are equivalent. To show that \eqref{eq:infinite-OD-BF} $\implies$ \eqref{eq:infinite-invariance-condition} note that since $\mathrm{Act}(\bx)\subseteq\mathcal{T}$ it follows from \eqref{eq:infinite-OD-BF} that when $\bx\in\partial\mathcal{S}$ we have $\pdv{h}{\bx}(\tau,\bx)\bF(\bx)\geq 0$ for all $\tau\in \mathcal{T}$, which implies that \eqref{eq:infinite-invariance-condition} holds. To show that \eqref{eq:infinite-invariance-condition} $\implies$ \eqref{eq:infinite-OD-BF}, we will construct a $\theta\,:\,\mathcal{T} \times \R^n\rightarrow\R_{\geq 0}$ satisfying \eqref{eq:infinite-OD-BF}. Let $\alpha\in \mathcal{K}$ be arbitrary and define:
    \begin{equation}\label{eq:equivalent-theta}
        \theta(\tau,\bx) \coloneqq \begin{cases}
            0 & h(\tau,\bx) = 0, \\
            \frac{\mathrm{ReLU}(-\pdv{h}{\bx}(\tau,\bx)\bF(\bx))}{\alpha(h(\tau,\bx))} & h(\tau,\bx) \neq 0,
        \end{cases}
    \end{equation}
    where $\mathrm{ReLU}(\cdot)=\max\{0,\cdot\}$, which satisfies $\theta(\tau,\bx)\geq0$ for all $(\tau,\bx)\in\mathcal{T}\times\mathcal{S}$. 
    %Now take $\theta(\bx)\coloneqq \sup_{\tau\in \mathcal{T}}\vartheta(\tau,\bx)$. 
    When $h(\tau,\bx)=0$, i.e., when $\bx\in\partial \mathcal{S}$, \eqref{eq:infinite-OD-BF} with $\theta$ as above is equivalent to \eqref{eq:infinite-invariance-condition}. Otherwise, when $h(\tau,\bx)\neq0$ we have:
    \begin{equation*}
        \begin{aligned}
            \pdv{h}{\bx}(\tau,\bx)\bF(\bx) \geq & -\mathrm{ReLU}\left(-\pdv{h}{\bx}(\tau,\bx)\bF(\bx))\right)\\ 
            %& -\vartheta(\tau,\bx)\alpha(h(\tau,\bx)) \\ 
            \geq & -\theta(\tau,\bx)\alpha(h(\tau,\bx))
        \end{aligned}
    \end{equation*}
    implying \eqref{eq:infinite-OD-BF} holds. Since \eqref{eq:infinite-invariance-condition} and \eqref{eq:infinite-OD-BF} are equivalent, and, under the hypotheses of the proposition statement \eqref{eq:infinite-invariance-condition} is a necessary and sufficient condition for the forward invariance of $\mathcal{S}$, \eqref{eq:infinite-OD-BF} is also a necessary and sufficient condition for forward invariance of $\mathcal{S}$, as desired. 
\end{proof}

% One of the limitations of the preceding result is that $\theta$ depends on $\tau$. In the context of OD-CBFs, this would result in an infinite number of decision variables, turning a finite-dimensional optimization problem, i.e, a QP, into an infinite-dimensional one.
% Under stronger assumptions, the optimal decay variable $\theta$ from \eqref{eq:infinite-OD-BF} can be made independent of $\tau$.

% \begin{corollary}[Reduction to a single OD variable]\label{cor:strictness-thetax}
%     Consider the same hypothesis as in Proposition~\ref{prop:infinite-OD-BF} and suppose that:
%     \begin{align}\label{eq:strictness-assumption}
%         \frac{\partial h}{\partial \bx}(\tau,\bx) \bF(\bx) > 0, \ \forall \tau \in \mathrm{Act}(\bx),
%     \end{align}
%     for all $\bx\in\partial\Sc$.
%     Then, there exists $\alpha\in\Kc$ and $\bar{\theta}:\real^n\to\real_{\geq0}$ such that for each $\bx\in\Sc$:
%     \begin{equation}\label{eq:infinite-OD-BF-thetax}
%         \pdv{h}{\bx}(\tau,\bx)\bF(\bx)\geq -\bar{\theta}(\bx)\alpha(h(\tau,\bx)),\quad \forall \tau \in \mathcal{T}.
%     \end{equation}
% \end{corollary}
% \begin{proof}
%     We note that under the strictness assumption~\eqref{eq:strictness-assumption},
%     the function $\theta(\tau,\bx)$ constructed in the proof of Proposition~\ref{prop:infinite-OD-BF} is continuous. 
%     Hence, the function $\bar{\theta}(\bx) \coloneqq \max_{\tau\in\mathcal{T}}\theta(\tau,\bx)$ is well-defined and by a similar argument to the one in the proof of Proposition~\ref{prop:infinite-OD-BF}, it satisfies~\eqref{eq:infinite-OD-BF-thetax}.
% \end{proof}

\begin{remark}
    The results of this section for sets defined by an infinite number of constraints also capture sets defined by a finite number of constraints as a special case. Our main hypothesis is that $(\tau,\bx)\mapsto h(\tau,\bx)$ is continuous in $\tau\in\mathcal{T}$ for compact $\mathcal{T}$ and continuously differentiable in $\bx\in\R^n$. 
    % While our results are motivated by the setting where $\mathcal{T}\subset\R$, such ideas also specialize to the setting where $\mathcal{T}=\{1,2,\dots,N\}$ is a finite set that, e.g., labels a finite number of inequality constraints $h_\tau(\bx)\geq0$ for $\tau\in \mathcal{T}$, since $\tau\mapsto h(\tau,\bx)$ is trivially continuous over a finite set $\mathcal{T}$.
\end{remark}

\section{Control Under Infinite Constraints}
In this section, we extend the results from the previous section to control systems of the form~\eqref{eq:control-affine}. 
We start by defining an analogue of Definition~\ref{def:OD-BF} for such control systems.

\begin{definition}\label{def:OD-CBF}
    A function $h:\real\times\real^n\to\real$, $h(\tau,\bx)$, continuous in $\tau$ and continuously differentiable in $\bx$, is said to be an \textit{optimal-decay control barrier function} (OD-CBF) for~\eqref{eq:control-affine} on the set $\Sc$ as in~\eqref{eq:safe-set-infinite-constraints} if for any $\alpha\in\Kc^e$:
    \begin{equation}\label{eq:od-cbf-condition}
        \sup\limits_{\substack{\bu\in\real^m\\ \omega\in\real_{\geq0}} } 
        \Big\{
        \frac{\partial h}{\partial \bx}(\tau,\bx)( \bf(\bx) + \bg(\bx)\bu ) + \omega \alpha( h(\tau,\bx) ) 
        \Big\}
        \geq 0,
    \end{equation}
    for each $\bx\in\Sc$, $\tau\in\Tc$.
\end{definition}
An important feature of Definition~\ref{def:OD-CBF} is that it uses a single OD variable $\omega$ despite having infinite constraints.
% This is particularly important because 
If Definition~\ref{def:OD-CBF} required a different OD variable for each
constraint, optimization-based designs including the OD variables as decision variables would be impracticable, since they would require infinite decision variables.
Instead, Definition~\ref{def:OD-CBF} allows for the use of control synthesis meeting \eqref{eq:od-cbf-condition} via the semi-infinite optimization problem: 
% (which is a QP if $\Uc$ is polyhedral):
\begin{equation}\label{eq:infinite-OD-CBF-QP}
    \begin{aligned}
        \begin{bmatrix}
            \bk(\bx)\\ \bar{\theta}(\bx)
        \end{bmatrix} 
        & = \argmin_{\substack{\bu\in\real^m \\ \omega\in\R}}  \quad \tfrac{1}{2}\|\bu - \bk_{\rm{d}}(\bx)\|^2 + \tfrac{1}{2}p(\omega - \theta_{\rm{d}})^2 \\ 
        \mathrm{s.t.} & \quad \dot{h}(\tau,\bx,\bu) \geq -\omega \alpha( h(\tau,\bx) ), \ \forall \tau\in\Tc \\
        & \quad \omega \geq 0,
    \end{aligned}
\end{equation}
where $\dot{h}(\tau,\bx,\bu) = \frac{\partial h}{\partial \bx}(\tau,\bx)( \bf(\bx) + \bg(\bx)\bu )$. The following result ensures that if the barrier conditions are strictly feasible on the boundary of the safe set, then the OD-CBF condition~\eqref{eq:od-cbf-condition} is guaranteed to be feasible.

\begin{proposition}[Reduction to a single OD variable]\label{prop:reduction-one-OD-variable}
    Suppose that $\Tc$ is compact, Assumption~\ref{assump:well-posed} holds, and for all $\bx\in\partial\Sc$, 
    there exists $\bu\in\real^m$ such that: 
    \begin{align}\label{eq:strictness-assumption}
        \frac{\partial h}{\partial \bx}(\tau,\bx)(\bf(\bx) + \bg(\bx)\bu) > 0, \ \forall \tau\in\mathrm{Act}(\bx).
    \end{align}
    Then, for any $\alpha\in\Kc^e$, 
    there exists a neighborhood $\Dc$ of $\Sc$ such that~\eqref{eq:infinite-OD-CBF-QP} is strictly feasible for all $\bx\in\Dc$.
\end{proposition}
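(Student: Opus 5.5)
We need strict feasibility of the constraint set in \eqref{eq:infinite-OD-CBF-QP}: there is a pair $(\bu,\omega)$ with $\omega>0$ and
\[
\dot h(\tau,\bx,\bu)+\omega\alpha(h(\tau,\bx))>0 \quad \text{for all } \tau\in\Tc .
\]
Since $\Tc$ is compact and the left-hand side is continuous in $\tau$, this is the same as asking that the minimum over $\tau$ be positive. The plan has three steps. First, handle each boundary point using \eqref{eq:strictness-assumption} and the compactness argument from Lemma~\ref{lem:practical}. Second, extend the resulting pair to a neighborhood of that point. Third, cover the rest of $\Sc$ and take a union of neighborhoods.

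\textbf{Step 1 (boundary points).} Fix $\bx_0\in\partial\Sc$. Let $\bu_0$ be the input from \eqref{eq:strictness-assumption}. By continuity in $\tau$ and compactness of $\mathrm{Act}(\bx_0)$, we have $\min_{\tau\in\mathrm{Act}(\bx_0)}\dot h(\tau,\bx_0,\bu_0)=:\eta>0$. As in the proof of Lemma~\ref{lem:practical}, there is an open set $\Nc\supset\mathrm{Act}(\bx_0)$ on which $\dot h(\tau,\bx_0,\bu_0)>\eta/2$. The set $\Tc\setminus\Nc$ is compact, so $\epsilon_2:=\min_{\Tc\setminus\Nc}h(\tau,\bx_0)>0$. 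Now pick $\omega_0>0$ large enough that
\[
\omega_0\,\alpha(\epsilon_2/2)>\max_{\tau\in\Tc}|\dot h(\tau,\bx_0,\bu_0)|+1 .
\]
The maximum on the right is finite by compactness, and $\alpha(\epsilon_2/2)>0$.

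\textbf{Step 2 (neighborhood of a boundary point).} Both $h$ and $\dot h(\cdot,\cdot,\bu_0)$ are jointly continuous on $\Tc\times\R^n$. Hence, by a tube-lemma argument over the compact $\Tc$, they are uniformly close to their values at $\bx_0$ for all $\bx$ in a small ball $\Wc(\bx_0)$. On that ball, split $\tau$ into two cases.
\begin{itemize}
\item For $\tau\in\Nc$: we have $\dot h>\eta/4$, and $h(\tau,\bx)\ge -\delta$ with $\delta$ small. Since $\alpha$ is extended class $\Kc$ and continuous with $\alpha(0)=0$, the term $\omega_0\alpha(h)$ is at least $-\omega_0\alpha(-\delta)$ in magnitude terms, and this can be made smaller than $\eta/8$ by shrinking the ball. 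The key point is that $\omega_0$ is fixed before the ball is shrunk.
\item For $\tau\notin\Nc$: we have $h(\tau,\bx)>\epsilon_2/2$. Then $\omega_0\alpha(h)$ dominates $|\dot h|$.
\end{itemize}
Together these give strict feasibility on $\Wc(\bx_0)$ with the pair $(\bu_0,\omega_0)$.

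\textbf{Step 3 (interior and assembly).} For $\bx_0\in\mathrm{int}\,\Sc$, we have $\min_\tau h(\tau,\bx_0)>0$ by compactness. So $\bu=0$ together with a large $\omega$ works, and by continuity it keeps working on a neighborhood. Taking $\Dc$ to be the union of all these open neighborhoods gives an open set containing $\Sc$ on which \eqref{eq:infinite-OD-CBF-QP} is strictly feasible.

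Assumption~\ref{assump:well-posed} is implied by \eqref{eq:strictness-assumption} with $\bv=\bf+\bg\bu$, so it is not used separately. I expect the main obstacle to be the order of quantifiers in Step 2. One must fix $\omega_0$ using $\epsilon_2$ at $\bx_0$, and only then shrink the neighborhood so that the negative contribution $\omega_0\alpha(h)$ for near-active $\tau$ outside $\Sc$ stays below the $\dot h$ margin. This needs the uniform-in-$\tau$ continuity that compactness of $\Tc$ provides.
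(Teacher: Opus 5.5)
Your proof is correct, and its core mechanism matches the paper's. You split $\Tc$ into an open set around $\mathrm{Act}(\bx_0)$, where the strict inequality \eqref{eq:strictness-assumption} carries the constraint, and its compact complement, where $h\ge\epsilon_2>0$ and a large $\omega$ dominates; interior points are likewise handled by a large $\omega$.

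Where you differ is in passing from boundary points to a neighborhood. The paper first globalizes. Mimicking the partition-of-unity construction of Lemma~\ref{lem:practical}, it builds a smooth $\bar{\bu}$ and then a smooth $\tilde\theta$ that satisfy the strict inequality on all of $\partial\Sc$. Only then does it appeal to continuity of the left-hand side to obtain a neighborhood $\Nc$ of $\partial\Sc$. You stay local: at each point you fix a constant pair $(\bu_0,\omega_0)$, then shrink a ball using uniform-in-$\tau$ continuity over the compact $\Tc$, and let $\Dc$ be the union of these balls. Choosing $\omega_0$ before the ball is the right order for controlling $\alpha(h)<0$ just outside $\Sc$.

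Since the statement only asserts pointwise strict feasibility, your route needs no partitions of unity. It also spells out the uniform-continuity step that the paper compresses into one sentence. The paper's route additionally yields a smooth strictly feasible pair near $\partial\Sc$, though a smooth strictly feasible selection on all of $\Dc$ can be recovered from your conclusion via Proposition~\ref{prop:artstein-infinite-constraints}. Your observation that Assumption~\ref{assump:well-posed} is implied by \eqref{eq:strictness-assumption} is right.

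Two bookkeeping details:
\begin{itemize}
\item Your tube-lemma step uses joint continuity of $h$ and $\partial h/\partial\bx$ in $(\tau,\bx)$. That comes from the standing $C^1$ assumption on $h$ in Section~\ref{sec:prelim}, not from separate continuity in $\tau$ and $\bx$.
\item In your second case, keep the tolerance on $\dot h$ below $1$ (e.g., $\min\{\eta/4,1/2\}$), so that the $+1$ margin in your choice of $\omega_0$ absorbs it.
\end{itemize}
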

\begin{proof}
    By an argument analogous to Lemma~\ref{lem:practical}, there exists a smooth controller $\bar{\bu}:\real^n\to\real^m$ 
    and a sufficiently small positive function $\epsilon:\real^n\to\real_{>0}$ 
    such that for all $\bx\in\partial\Sc$:
    \begin{align}\label{eq:strictness-condition}
        \frac{\partial h}{\partial \bx}(\tau,\bx)(\bf(\bx)+\bg(\bx)\bar{\bu}(\bx) ) > 0, \ \forall\tau\in\mathrm{Act}_{\epsilon(\bx)}(\bx),
    \end{align}    
    with $\text{Act}_{\epsilon(\bx)}(\bx)$ defined as in~\eqref{eq:Act-eps}.
    Now, let $\bF(\bx) := \bf(\bx) + \bg(\bx)\bar{\bu}(\bx)$ and define:
    \begin{align*}
        \theta(\tau,\bx) = \frac{ | \pdv{h}{\bx}(\tau,\bx) \bF(\bx) | + 1 }{\alpha( \epsilon(\bx) )}
    \end{align*}
    % for each $\tau\in\Tc$, define \begin{equation}\label{eq:equivalent-theta-Prop2}
    %     \theta(\tau,\bx) \coloneqq \begin{cases}
    %         0 & \tau \in \text{Act}_{\epsilon(\bx)}(\bx), \\
    %         \frac{ \| \pdv{h}{\bx}(\tau,\bx) \|
    %         \| \bF(\bx) \| + \alpha(\epsilon(\bx)) }{\alpha( \epsilon(\bx) )} & \tau\in\Tc\setminus\text{Act}_{\epsilon(\bx)}(\bx),
    %     \end{cases}
    % \end{equation}
    %We need to add the extra $\alpha(\epsilon(\bx))$ in the numerator just in case 
    %\| \pdv{h}{\bx}(\tau,\bx) \| \| \bF(\bx) \| = 0.
    Further let $\bar{\theta}_1(\bx) = \sup_{\tau\in\Tc} \theta(\tau,\bx)$,
    which is finite because $\Tc$ is compact and $\frac{\partial h}{\partial \bx}(\tau,\bx)$ is continuous.
    Now, for each $\bx\in\partial\Sc$ and $\tau\in\Tc$ we have: 
    \begin{align*}
        \frac{\partial h}{\partial \bx}(\tau,\bx)(\bf(\bx)+\bg(\bx)\bar{\bu}(\bx) ) + \bar{\theta}(\bx)\alpha(h(\tau,\bx)) > 0.
    \end{align*}
    In particular, the inequality above holds for $\tau\in\text{Act}_{\epsilon(\bx)}(\bx)$ due to condition~\eqref{eq:strictness-condition}, together with the fact that $\bar\theta(\bx)$ and $\alpha(h(\tau,\bx))$ are both nonnegative. At the same time, for $\tau\in\Tc\setminus \text{Act}_{\epsilon(\bx)}(\bx)$, we have $h(\tau,\bx)> \epsilon$, and therefore, with  $\bar\theta_1(\bx)$, the left hand side is lower-bounded by $1$. 
    Hence,~\eqref{eq:infinite-OD-CBF-QP} is strictly feasible for all $\bx\in\partial\Sc$ and $\tau\in\Tc$.
    Then, by an argument analogous to the one in the proof of Lemma~\ref{lem:practical}, we can construct a smooth function $\tilde{\theta}:\real^n\to\real_{>0}$ such that: 
    \begin{align}\label{eq:strictly-feasible-expression}
        \frac{\partial h}{\partial \bx}(\tau,\bx)(\bf(\bx)+\bg(\bx)\bar{\bu}(\bx) ) + \tilde{\theta}(\bx)\alpha(h(\tau,\bx)) > 0,
    \end{align}
    holds for all $\bx\in\partial\Sc$ and $\tau\in\Tc$.
    Since the left hand side of~\eqref{eq:strictly-feasible-expression} is smooth in $\bx$ and $\tau$,
    there exists a neighborhood $\Nc$ of $\partial\Sc$ for which~\eqref{eq:infinite-OD-CBF-QP} is strictly feasible (by taking $\bu = \bar{\bu}(\bx)$ and $\omega = \tilde{\theta}(\bx)$).
    % Now, for $\bx^{\prime}\in\real^n$, define 
    % \begin{align*}
    %     &m_{\bx}(\bx^\prime) = \\
    %     &\min\limits_{\tau\in\Tc}
    %     \frac{\partial h}{\partial \bx}(\tau,\bx^\prime)(\bf(\bx^\prime)+\bg(\bx^\prime)\bar{\bu}(\bx) ) + \bar{\theta}_1(\bx)\alpha(h(\tau,\bx^\prime))
    % \end{align*}
    % By Berge's Maximum Theorem~\cite[Thm. 16.31]{Aliprantis06}, $m_{\bx}$ is continuous.
    % Since $m_{\bx}(\bx) > 0$, there exists a neighborhood $\mathcal{N}(\bx)$ of $\bx$ for which~\eqref{eq:infinite-OD-CBF-QP} is strictly feasible.
    % This means that there exists a neighborhood $\Nc$ of $\partial\Sc$ for which~\eqref{eq:infinite-OD-CBF-QP} is strictly feasible.
    It remains to show that~\eqref{eq:infinite-OD-CBF-QP} is strictly feasible for $\bx\in\Sc\setminus\Nc$. Given any such $\bx$, taking $\bu = \bar{\bu}(\bx)$ and:
    $$
    \omega = \frac{ \max_{\tau\in\Tc} | \pdv{h}{\bx}(\tau,\bx)  (\bf(\bx) + \bg(\bx)\bar{\bu}(\bx) ) | + 1 }{\min_{\tau\in\Tc} \alpha( h(\tau,\bx) )},
    $$
    satisfies the constraints strictly, as desired.
    % Let $\bar{\epsilon}(\bx) := \min\limits_{\tau\in\Tc}h(\tau,\bx)$, and note that $\bar{\epsilon}(\bx) > 0$ for all $\bx\in\Sc\setminus\Nc$.
    % Further define 
    % \begin{align*}
    %     \theta_2(\tau,\bx) = \frac{ | \pdv{h}{\bx}(\tau,\bx)  \bF(\bx) | + 1 }{\alpha( \bar{\epsilon}(\bx) )},
    % \end{align*}
    % and $\bar{\theta}_2(\bx) = \sup\limits_{\tau\in\Tc} \theta_2(\tau,\bx)$.
    % Strict feasibility of~\eqref{eq:infinite-OD-CBF-QP} at $\bx\in\Sc\setminus\Nc$ follows by letting $\omega = \bar{\theta}_2(\bx)$.
    % Now the result follows by taking $\Dc = \Sc\cup\Nc$.
\end{proof}

% An immediate consequence of Proposition~\ref{prop:reduction-one-OD-variable} is that \eqref{eq:infinite-OD-CBF-QP} is feasible, as one can always take $\bu$ and $\bar{\theta}$ as in the proof to satisfy \eqref{eq:od-cbf-condition}, and thus the constraints in \eqref{eq:infinite-OD-CBF-QP}.
Although Proposition~\ref{prop:reduction-one-OD-variable} guarantees feasibility of 
% the optimization problem
\eqref{eq:infinite-OD-CBF-QP}, to ensure that the resulting closed-loop system is safe, we must design such functions to be sufficiently regular.
% Now, consider the OD-CBF-QP with infinite constraints
% \begin{equation}\label{eq:infinite-OD-CBF-QP}
%     \begin{aligned}
%         \begin{bmatrix}
%             \bk(\bx)\\\theta(\bx)
%         \end{bmatrix} &= \argmin_{\substack{\bu\in\mathcal{U} \\ \omega\in\R}}  \quad \tfrac{1}{2}\|\bu - \bk_{\rm{d}}(\bx)\|^2 + \tfrac{1}{2}p(\omega - \theta_{\rm{d}})^2 \\ 
%         \mathrm{s.t.} & \quad \frac{\partial h}{\partial \bx}(\tau,\bx)\bf(\bx) + 
%         \frac{\partial h}{\partial \bx}(\tau,\bx)\bg(\bx)\bu \geq -\omega \alpha(h(\tau,\bx))
%         \\
%         & \quad \omega \geq 0.
%     \end{aligned}
% \end{equation}
%
% As shown in Proposition~\ref{prop:reduction-one-OD-variable},~\eqref{eq:infinite-OD-CBF-QP} is guaranteed to be feasible.
% However, in order to guarantee that the resulting controller is safe, we need to establish it is sufficiently regular.
To do so, we consider the general setting of designing a controller to satisfy
an infinite set of parametric constraints:
\begin{align}\label{eq:infinite-inequalities}
        a(\tau,\bx) + \bb(\tau,\bx) \bnu \leq 0, \quad \tau \in \mathcal{T},
\end{align}
where $a:\mathcal{T}\times\mathcal{X}\to\real$,
$\bb:\mathcal{T}\times\mathcal{X}\to\real^{1\times m_{\bnu}}$ are continuous.
% Such constraints could, e.g., be induced by CBF-like conditions on the set from \eqref{eq:safe-set-infinite-constraints} where the input must satisfy:
% \begin{equation}
%     \underbrace{-
%     \pdv{h}{\bx}(\tau,\bx)\bf(\bx) - \alpha(h(\tau,\bx))}_{a(\tau,\bx)} \underbrace{-\pdv{h}{\bx}(\tau,\bx)\bg(\bx)}_{\bb(\tau,\bx)}\bu \leq 0,
% \end{equation}
% for all $\tau\in\mathcal{T}$. 
OD-CBF constraints take the form~\eqref{eq:infinite-inequalities} by defining $\bnu=[\bu,\omega]^\top$.
% appending the control input with the additional OD variable.
To extend the ideas from Sec. \ref{sec:infinite-invariance} to control design, we first present an extension of Artstein's Theorem~\cite{Artstein83} to infinite constraints, outlining conditions under which there exists a smooth feedback controller satisfying \eqref{eq:infinite-inequalities}.

\begin{proposition}[Artstein for infinite constraints]\label{prop:artstein-infinite-constraints}
    Let $\mathcal{T}\subset\real$ be compact. 
    Let $\mathcal{X}\subset\real^n$ and
    consider continuous functions $a:\mathcal{T}\times\mathcal{X}\to\real$,
    $\bb:\mathcal{T}\times\mathcal{X}\to\real^{1\times m_{\bnu}}$ defining \eqref{eq:infinite-inequalities}.
    % Consider the set of inequalities 
    % \begin{align}\label{eq:infinite-inequalities}
    %     a(\tau,\bx) + \bb(\tau,\bx) \bu \leq 0, \quad \tau \in \mathcal{T}.
    % \end{align}
    Suppose that for each $\bx\in\mathcal{X}$, there exists $\bnu\in\real^{m_{\bnu}}$ satisfying:
    \begin{equation}\label{eq:slaters}
        a(\tau,\bx) + \bb(\tau,\bx) \bnu < 0, \quad \forall \tau \in \mathcal{T}.
    \end{equation}
    Then, there exists a $\mathcal{C}^{\infty}$ selection $\bk:\mathcal{X}\to\real^{m_{\bnu}}$ such that:
    \begin{equation}\label{eq:a-b-strictly}
        a(\tau,\bx) + \bb(\tau,\bx)\bk(\bx) < 0,\quad \forall (\tau,\bx)\in\mathcal{T}\times \mathcal{X}.
    \end{equation}
\end{proposition}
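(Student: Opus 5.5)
The plan is to follow the classical partition-of-unity proof of Artstein's Theorem, the same device already used in Lemma~\ref{lem:practical}, and to use compactness of $\mathcal{T}$ to make strict feasibility locally uniform in $\bx$.

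\textbf{Step 1 (local uniformity).} Fix $\bx_0\in\mathcal{X}$ and a vector $\bnu_0$ satisfying \eqref{eq:slaters} at $\bx_0$. The map $\tau\mapsto a(\tau,\bx_0)+\bb(\tau,\bx_0)\bnu_0$ is continuous on the compact set $\mathcal{T}$. It therefore attains its maximum, which is some $-\eta(\bx_0)<0$.

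Next consider
\[
g_{\bnu_0}(\tau,\bx):=a(\tau,\bx)+\bb(\tau,\bx)\bnu_0,
\]
which is jointly continuous on $\mathcal{T}\times\mathcal{X}$. By the tube lemma (equivalently, uniform continuity on $\mathcal{T}\times\overline{B}$ for a compact ball $\overline{B}$ around $\bx_0$), there is a relatively open neighborhood $\mathcal{W}(\bx_0)\subset\mathcal{X}$ of $\bx_0$ such that
\[
g_{\bnu_0}(\tau,\bx)<-\eta(\bx_0)/2 \quad\text{for all } \tau\in\mathcal{T},\ \bx\in\mathcal{W}(\bx_0).
\]
In other words, the constant input $\bnu_0$ strictly satisfies all infinitely many constraints on a whole neighborhood of $\bx_0$. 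I expect this to be the main obstacle: pointwise strict feasibility must be upgraded to strict feasibility uniform in $\tau$ and locally in $\bx$. Without compactness of $\mathcal{T}$ the supremum over $\tau$ could equal $0$, and the argument would fail.

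\textbf{Step 2 (gluing).} The family $\{\mathcal{W}(\bx)\}_{\bx\in\mathcal{X}}$ is an open cover of $\mathcal{X}$. Extend each $\mathcal{W}(\bx)$ to an open subset of $\real^n$, so that these sets cover an open set $\mathcal{O}\supseteq\mathcal{X}$ that is a manifold. Take a countable, locally finite $C^\infty$ partition of unity $\{\psi_j\}$ subordinate to this cover, as in \cite[Theorem 1.11]{warner89}, with $\mathrm{supp}\,\psi_j\subset\mathcal{W}(\bx_j)$. Define
\[
\bk(\bx)=\sum_j\psi_j(\bx)\bnu_j,
\]
where $\bnu_j$ is the vector chosen at $\bx_j$. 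Local finiteness makes $\bk$ a $C^\infty$ function.

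\textbf{Step 3 (verification).} Fix $\bx\in\mathcal{X}$ and $\tau\in\mathcal{T}$. The constraint is affine in $\bnu$ and $\sum_j\psi_j(\bx)=1$, so
\[
a(\tau,\bx)+\bb(\tau,\bx)\bk(\bx)=\sum_j\psi_j(\bx)\big(a(\tau,\bx)+\bb(\tau,\bx)\bnu_j\big).
\]
Every index with $\psi_j(\bx)>0$ has $\bx\in\mathcal{W}(\bx_j)$, so its term is $<0$ by Step 1. At least one such weight is positive, so the convex combination is strictly negative, giving \eqref{eq:a-b-strictly}.

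A minor technical point is that $\mathcal{X}$ need not be open. This is handled by working on the open set $\mathcal{O}$ and restricting $\bk$ to $\mathcal{X}$, which is what ``$C^\infty$ on $\mathcal{X}$'' means here.
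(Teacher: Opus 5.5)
Your proposal is correct and takes essentially the same route as the paper. The paper also fixes a strictly feasible vector at each $\bx$ and uses compactness of $\mathcal{T}$ to obtain a neighborhood on which that constant vector stays strictly feasible; it gets this from Berge's Maximum Theorem applied to $\bx'\mapsto\max_{\tau\in\mathcal{T}}\,a(\tau,\bx')+\bb(\tau,\bx')\bv(\bx)$ (the paper writes ``$\min$'', but it should be $\max$, as you use it), then glues these vectors with a smooth partition of unity as in the cited Artstein-type lemma, matching your tube-lemma Step 1 and your Steps 2--3.
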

\begin{proof}
    For each $\bx\in\mathcal{X}$, let $\bv(\bx)$ be the vector $\bnu\in\real^{m_{\bnu}}$ satisfying~\eqref{eq:slaters}.
    Consider the function:
    \begin{align*}
        m_{\bx}(\bx^{\prime}) = \min\limits_{\tau\in \mathcal{T}} a(\tau,\bx^{\prime}) + \bb(\tau,\bx^{\prime}) \bv(\bx).
    \end{align*}
    Since $a$ and $\bb$ are continuous and $\mathcal{T}$ is compact, $m_{\bx}$ is continuous by Berge's Maximum Theorem~\cite[Thm. 16.31]{Aliprantis06}.
    Hence, there exists a neighborhood $\Wc(\bx)$ of $\bx$ such that $m_{\bx}(\bx^{\prime}) < 0$ for all $\bx^{\prime}\in \Wc(\bx)$.
    Now the proof follows the same Artstein-style argument as~\cite[Lemma 6.5]{PO-BC-LS-JC:23-auto}.
    % which we include here for completeness.
    % Note that $\{ W(\bx) \}_{\bx\in\mathcal{X}}$ is an open cover of $\mathcal{X}$.
    % Therefore, there exists a countable partition of unity $\{ \psi_j \}_{j\in J}$ (with $J$ a countable set)
    % subordinate to the cover~\cite[Theorem 1.11]{warner89}.
    % In other words, for each $j$, there exists $\bx_j\in\mathcal{X}$ such that $\text{supp}(\psi_j)$ is a subset of $W(\bx_j)$, each of which has an associated control $\bv(\bx_j) =: \bv_j$ satisfying~\eqref{eq:infinite-inequalities}. Now, define $\bk(\bx) = \sum_{j\in J} \psi_j(\bx) \bv_j$.
    % We have:
    % \begin{align*}
    %     a(\tau,\bx) + \bb(\tau,\bx)\bk(\bx) = a(\tau,\bx) + \sum_{j\in J} \bb(\tau,\bx)\bv_j \psi_j(\bx).
    % \end{align*}
    % For each $\bx\in\mathcal{X}$, let $J(\bx) = \{ j\in J : \psi_j(\bx) \neq 0 \}$.
    % Since $\text{supp}(\psi_j)$ is a subset of $W(\bx_j)$, $\bb(\tau,\bx) \bv_j < -a(\tau,\bx)$ for all $j\in J(\bx)$. The result follows from the fact that $\sum_{j\in J(\bx)} \psi_j(\bx) = 1$.
\end{proof}

Proposition \ref{prop:artstein-infinite-constraints} is non-constructive. It proves existence of a smooth controller, but does not provide a procedure to obtain it. In practice, controllers meeting such constraints are often synthesized using optimization-based techniques:
\begin{equation}\label{eq:infinite-selection}
    \bk(\bx) = \argmin_{\bnu\in\mathcal{V}(\bx)}\|\bnu-\bk_\des(\bx)\|^2,
\end{equation}
where $\bk_\des:\mathcal{X}\rightarrow\R^{m_{\bnu}}$ is a nominal controller and $\mathcal{V}:\mathcal{X}\rightrightarrows\R^{m_{\bnu}}$ is the set-valued map:
\begin{equation}
        \mathcal{V}(\bx)\!\coloneqq\! \{\bnu\in\real^{m_{\bnu}}\mid a(\tau,\bx) + \bb(\tau,\bx)\bnu \leq 0, \forall \tau\in\mathcal{T} \},
\end{equation}
% \begin{equation}\label{eq:general-semi-infinite-qp}
%     \begin{aligned}
%         \min_{\bu\in\R^m} \quad & \frac{1}{2}\|\bu - \bk_{\rm{d}}(\bx)\|^2 \\ 
%         \mathrm{s.t.} \quad &  a(\tau,\bx) + \bb(\tau,\bx) \bu \leq 0, \quad \forall\tau \in \mathcal{T}.
%     \end{aligned}
% \end{equation}
Asking for this controller to be locally Lipschitz is typically too strong of a request; however, under conditions similar to those in Proposition \ref{prop:artstein-infinite-constraints}, it is at least continuous.

\begin{proposition}[Continuity of optimization-based controller]
    % Let $J\,:\,\mathcal{X}\times\R^m\rightarrow\R$, $a\,:\,\mathcal{T}\times\mathcal{X}\rightarrow\R$, and $\bb\,:\,\mathcal{T}\times\mathcal{X}\rightarrow\R^{1\times m}$ be continuous and suppose that $\bu\mapsto J(\bx,\bu)$ is strictly convex and coercive for each $\bx\in\mathcal{X}$. Provided that for each $\bx\in\mathcal{X}$ there exists a $\bu\in\R^m$ satisfying \eqref{eq:slaters}, then $\bk\,:\,\mathcal{X}\rightarrow\R^m$ as in \eqref{eq:infinite-selection} is continuous.
    Consider the same setting as in Proposition~\ref{prop:artstein-infinite-constraints}. Provided that for each $\bx\in\mathcal{X}$ there exists a $\bnu\in\real^{m_{\bnu}}$ satisfying \eqref{eq:slaters}, then $\bk\,:\,\mathcal{X}\rightarrow\real^{m_{\bnu}}$ as in \eqref{eq:infinite-selection} is continuous.
\end{proposition}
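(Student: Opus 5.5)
We will treat \eqref{eq:infinite-selection} as the metric projection of $\bk_\des(\bx)$ onto the closed convex set $\mathcal{V}(\bx)$, and deduce continuity of $\bk$ from continuity of the set-valued map $\mathcal{V}$. Each $\mathcal{V}(\bx)$ is an intersection of closed half-spaces, so it is closed and convex. It is nonempty by \eqref{eq:slaters}. Hence the minimizer exists and is unique, and $\bk$ is well defined. The plan is to show that $\mathcal{V}$ is both upper and lower hemicontinuous with closed values. The strictly convex objective $(\bx,\bnu)\mapsto\|\bnu-\bk_\des(\bx)\|^2$ (with $\bk_\des$ assumed continuous) is jointly continuous. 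Berge's Maximum Theorem~\cite[Thm. 16.31]{Aliprantis06} then gives upper hemicontinuity of the argmin correspondence. A singleton-valued upper hemicontinuous correspondence is a continuous function.

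\emph{Closed graph.} Suppose $\bx_k\to\bx$, $\bnu_k\to\bnu$, and $\bnu_k\in\mathcal{V}(\bx_k)$. By continuity of $a$ and $\bb$, passing to the limit in $a(\tau,\bx_k)+\bb(\tau,\bx_k)\bnu_k\le 0$ for each fixed $\tau$ gives $\bnu\in\mathcal{V}(\bx)$.

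\emph{Lower hemicontinuity.} This uses the Slater condition \eqref{eq:slaters}. Fix $\bx$ and $\bnu\in\mathcal{V}(\bx)$, and let $\bv$ be a strictly feasible point at $\bx$. As in the proof of Proposition~\ref{prop:artstein-infinite-constraints}, the function $m_{\bx}(\bx')=\max_{\tau\in\mathcal{T}}\,a(\tau,\bx')+\bb(\tau,\bx')\bv$ is continuous and negative at $\bx$. Note that the relevant quantity is the maximum over $\tau$; this is continuous by Berge as well, since $\mathcal{T}$ is compact. So $\bv\in\mathcal{V}(\bx')$, strictly, for all $\bx'$ near $\bx$. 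For $\lambda\in(0,1]$ set $\bnu_\lambda=(1-\lambda)\bnu+\lambda\bv$. By linearity in $\bnu$,
\[
a(\tau,\bx)+\bb(\tau,\bx)\bnu_\lambda\le \lambda\, m_{\bx}(\bx)<0
\]
uniformly in $\tau$. By uniform continuity of $(\tau,\bx')\mapsto a(\tau,\bx')+\bb(\tau,\bx')\bnu_\lambda$ on $\mathcal{T}\times$(a compact neighborhood of $\bx$), it follows that $\bnu_\lambda\in\mathcal{V}(\bx')$ for $\bx'$ close enough to $\bx$. Given a sequence $\bx_k\to\bx$, a diagonal choice $\lambda_k\downarrow0$ then produces $\bnu_{\lambda_k}\in\mathcal{V}(\bx_k)$ converging to $\bnu$, which is lower hemicontinuity.

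\emph{Main obstacle: compactness.} Berge's theorem requires compact-valued correspondences, but $\mathcal{V}(\bx)$ may be unbounded. To handle this, we will restrict to $\mathcal{V}_R(\bx)=\mathcal{V}(\bx)\cap \bar B_R$, where $R(\bx')=\|\bv-\bk_\des(\bx')\|+\|\bk_\des(\bx')\|+1$ on a neighborhood of $\bx$, using the locally valid strictly feasible $\bv$ above. Any minimizer must satisfy $\|\bk(\bx')-\bk_\des(\bx')\|\le\|\bv-\bk_\des(\bx')\|$, so it lies in this ball, and the argmin over $\mathcal{V}_R$ coincides with $\bk$ locally. The truncated map is still lower hemicontinuous, because the Slater point $\bnu_\lambda$ lies in the interior of the ball when $\bnu$ does. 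If $\bnu$ is on the sphere, we instead approximate by points in the relative interior, or enlarge $R$ by $1$ as chosen. It is compact-valued with closed graph, hence upper hemicontinuous. Berge then applies locally around each $\bx$, which yields continuity of $\bk$ on $\mathcal{X}$.
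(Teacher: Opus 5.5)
Your argument is correct, but it takes a different route from the paper. The paper checks only lower semicontinuity of $\mathcal{V}$: an open set meeting $\mathcal{V}(\bx)$ meets its interior, interior points satisfy \eqref{eq:slaters}, and such points stay feasible for nearby $\bx'$. It then cites the minimal-selection result \cite[Prop.~2.19]{freeman1996robust}, which works directly with unbounded closed convex values, so no truncation is needed. You instead rebuild that special case from Berge's theorem, which the paper already uses elsewhere. This forces you to add the closed-graph property and to reuse the Slater point $\bv$ as an a priori bound on the minimizer, so that you can pass to compact values.

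The paper's route is shorter. Yours is self-contained, and it buys two things. Your $\bnu_\lambda$ construction supplies an explicit uniform margin $\lambda\, m_{\bx}(\bx)$, in place of the paper's unargued claim that interior points are strictly feasible. Your closed-graph step also makes explicit an ingredient the paper's proof never mentions but which is genuinely needed: lower semicontinuity alone does not make a minimal selection continuous (take $F(0)=[1,\infty)$ and $F(x)=\R$ for $x\neq0$).

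A few justifications should be tightened:
\begin{itemize}
\item Closed graph plus compact values does not by itself give upper hemicontinuity. It does here because $R$ is continuous, so near $\bx$ all values of $\mathcal{V}_R$ lie in one compact ball.
\item The sphere case needs no separate treatment. The $+1$ in $R$ gives $\|\bv\|<R(\bx)$, hence $\|\bnu_\lambda\|\le(1-\lambda)\|\bnu\|+\lambda\|\bv\|<R(\bx)$ for every $\lambda\in(0,1]$, and this persists for nearby $\bx'$.
\item Since $\mathcal{X}$ is an arbitrary subset, it may have no compact neighborhoods. Replace the uniform-continuity step by continuity of $\bx'\mapsto\max_{\tau\in\Tc}[a(\tau,\bx')+\bb(\tau,\bx')\bnu_\lambda]$, exactly as you argued for $m_{\bx}$ (which you rightly take as a maximum over $\tau$).
\end{itemize}
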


\begin{proof}
    From \cite[Prop 2.19]{freeman1996robust}, the minimal selection on a finite dimensional space~$\real^{m_{\bnu}}$ as in \eqref{eq:infinite-selection} is continuous if the set-valued map $\Vc$ is lower semicontinuous~\cite[Def. 2.2]{freeman1996robust} on~$\Xc$. To establish this property, consider a fixed $\bx\in\Xc$ and any open set $\mathcal{O}\subset \real^{m_{\bnu}}$ such that $\Vc(\bx)\cap\mathcal{O}$ is nonempty. Because the assumption~\eqref{eq:slaters} ensures $\Vc(\bx)$ has a nonempty interior, its intersection with the open set $\mathcal{O}$ must also include a $\bnu\in\Vc(\bx)\cap\mathcal{O}$ in the interior of $\Vc(\bx)$. Given such a~$\bnu$, it must satisfy~\eqref{eq:slaters}, and therefore, $\bnu\in\Vc(\bx')$ for $\bx'$ in a neighborhood $\Wc(\bx)$ of $\bx$, as shown in the proof of Proposition~\ref{prop:artstein-infinite-constraints}. As such, $\bnu\in\Vc(\bx)\cap\mathcal{O}\subseteq\mathcal{O}$ also belongs to $\Vc(\bx')\cap \mathcal{O}$, proving lower semicontinuity of $\Vc$ at each $\bx\in\Xc$ and continuity of $\bk$ on $\Xc$, as needed.
\end{proof}

While the previous result guarantees that controllers synthesized from \eqref{eq:infinite-selection} are continuous, the results in Sec. \ref{sec:infinite-invariance} rely on the assumption that the vector field $\bF$ defining the closed-loop dynamics \eqref{eq:closed-loop} is locally Lipschitz. To rectify this, the following result outlines weaker conditions for forward invariance of \eqref{eq:safe-set-infinite-constraints} for continuous closed-loop dynamics.

\begin{theorem}[Safety guarantee under weaker conditions]\label{thm:safety-under-weaker-conditions}
    Consider system~\eqref{eq:closed-loop} and the set $\Sc$ defined with an infinite number of inequality constraints as in~\eqref{eq:safe-set-infinite-constraints}, where $h$ is continuous in $\tau$ and continuously differentiable in $\bx$. Suppose $\cal T$ is compact, Assumption~\ref{assump:well-posed} holds, and $\bF$ is continuous. Then $\cal S$ is forward invariant for~\eqref{eq:closed-loop} if there exists an open neighborhood $\mathcal{N}\supset\partial\mathcal{S}$ such that for each $\bx\in\mathcal{N}\backslash\Sc$:
    \begin{equation}\label{eq:continuous-invariance-condition}
        \pdv{h}{\bx}(\tau,\bx) \bF(\bx) \geq 0,~\forall \tau\in\mathcal{M}(\bx)\coloneqq \argmin_{\tau\in\mathcal{T}} h(\tau,\bx).
    \end{equation}
    % where $\mathcal{M}(\bx)\coloneqq \argmin_{\tau\in\mathcal{T}} h(\tau,\bx)$.
\end{theorem}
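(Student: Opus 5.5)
Let $H(\bx)\coloneqq\min_{\tau\in\Tc}h(\tau,\bx)$, so that $\Sc=\{\bx\mid H(\bx)\geq 0\}$. The plan is to argue by contradiction with a Danskin-type derivative of $H$ along solutions. Since $\Tc$ is compact and $h$ is continuous, $H$ is continuous, and $\mathcal{M}(\bx)$ is nonempty and compact. Because $h$ is continuously differentiable in $\bx$ and the partial derivative is jointly continuous, Danskin's theorem gives that for any $C^1$ curve $t\mapsto\bx(t)$, the function $t\mapsto H(\bx(t))$ has right (upper) Dini derivative
\[
D^+H(\bx(t)) = \min_{\tau\in\mathcal{M}(\bx(t))}\pdv{h}{\bx}(\tau,\bx(t))\dot{\bx}(t).
\]
Since $\bF$ is only continuous, solutions exist by Peano but may not be unique. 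Forward invariance must therefore be shown for every solution, and the argument has to avoid the perturbation and uniqueness machinery used in Theorem~\ref{thm:infinite-nagumo}.

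\textbf{Main step.} Suppose some solution $\bx(\cdot)$ starts at $\bx_0\in\Sc$ and leaves $\Sc$. Let $t_0\coloneqq\sup\{t\mid \bx(s)\in\Sc,\ \forall s\in[0,t]\}$. By continuity, $\bx(t_0)\in\partial\Sc$, so $H(\bx(t_0))=0$. There are also times arbitrarily close to $t_0$ from the right with $H(\bx(t))<0$. Since $\Nc$ is open and contains $\partial\Sc$, there is $\delta>0$ such that $\bx(t)\in\Nc$ for all $t\in[t_0,t_0+\delta]$. On the subset of $(t_0,t_0+\delta]$ where $\bx(t)\notin\Sc$, condition \eqref{eq:continuous-invariance-condition} together with the Danskin formula gives $D^+H(\bx(t))\geq 0$.

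Pick $t_2\in(t_0,t_0+\delta]$ with $H(\bx(t_2))<0$, and let $t_1\coloneqq\sup\{t\leq t_2\mid H(\bx(t))\geq 0\}$. Then $t_1\geq t_0$ and $H(\bx(t_1))=0$. On $(t_1,t_2]$ we have $H<0$, so the trajectory lies in $\Nc\setminus\Sc$ and $D^+H\geq 0$ there. A continuous function with nonnegative upper Dini derivative on an interval is nondecreasing (the standard monotonicity lemma). Hence $H(\bx(t_2))\geq\lim_{t\downarrow t_1}H(\bx(t))=H(\bx(t_1))=0$, contradicting $H(\bx(t_2))<0$. Applying this on $(t_1,t_2]$ requires only continuity of $H\circ\bx$ at $t_1$, so the argument needs no uniqueness of solutions.

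\textbf{Expected obstacle.} The delicate step is justifying the Danskin one-sided derivative formula with only continuity in $\tau$. This needs joint continuity of $(\tau,\bx)\mapsto\pdv{h}{\bx}(\tau,\bx)$, obtained from uniform continuity on compact sets as in Lemma~\ref{lem:practical}. It also needs outer semicontinuity of $\mathcal{M}$, which follows from Berge's Maximum Theorem \cite[Thm. 16.31]{Aliprantis06}. In fact only the inequality $D^+H\geq\min_{\tau\in\mathcal{M}}\pdv{h}{\bx}(\tau,\bx)\dot\bx$ is needed, and I would prove it directly via a mean-value expansion over minimizing sequences $\tau_k$. Assumption~\ref{assump:well-posed} is not essential to this argument. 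It only guarantees that $\partial\Sc$ is the topological boundary, i.e. that points of $\partial\Sc$ are limits of points where $H<0$, so that the hypothesis on $\Nc\setminus\Sc$ is non-vacuous near the boundary.
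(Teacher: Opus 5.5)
Your proposal is correct and follows essentially the same route as the paper: define $H(\bx)=\min_{\tau\in\Tc}h(\tau,\bx)$, use Danskin's theorem to express the one-sided derivative of $t\mapsto H(\bx(t))$ as $\min_{\tau\in\mathcal{M}(\bx(t))}\pdv{h}{\bx}(\tau,\bx(t))\bF(\bx(t))$, and derive a contradiction from the monotonicity of this continuous function on an interval where the trajectory has just left $\Sc$ but is still in $\Nc\setminus\Sc$. Your choice of $t_1$ as the last time before $t_2$ with $H\geq 0$ makes explicit what the paper only asserts, namely that such an interval $(t_1,t_2]$ with $H<0$ exists, and using $D^+$ instead of the paper's $D_+$ changes nothing, since the right derivative of $H\circ\bx$ exists here.
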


\begin{proof}
    Since $\bF$ is continuous, it follows from Peano's Existence Theorem \cite[Thm. 1.1.2]{LakshmikanthamLeela} that for each initial condition $\bx_0$ the ODE $\dot{\bx}=\bF(\bx)$ admits at least one continuously differentiable solution. 
    % {\color{blue} Alternative proof:
    % For the sake of contradiction, suppose that $\Sc$ is not forward invariant.
    % That is, suppose that $\bx_0\in\Sc$ but there exists at least one solution that leaves $\Sc$. 
    % Since any solution $t\to\bx(t)$ is continuous, this implies that there must exist $\tau_*$ and
    % times $t_2 > t_1 \geq 0$ for which $h(\tau_*,\bx(t_1))=0$ and $h(\tau_*,\bx(t))>0$ for $t\in[t_1,t_2]$.
    % }
    %
    Define $H(\bx) \coloneqq \min_{\tau\in\mathcal{T}}h(\tau,\bx),$
    % \begin{equation}
    %     H(\bx) \coloneqq \min_{\tau\in\mathcal{T}}h(\tau,\bx),
    % \end{equation}
    where the minimum exists and $\bx\mapsto H(\bx)$ is continuous since $h$ is continuous and $\mathcal{T}$ is compact, 
    allowing $\mathcal{S}$ to be represented as
    $\mathcal{S} = \{\bx\in\R^n\mid H(\bx)\geq0\}$.
    Consider any solution $t\mapsto\bx(t)$ such that $\dot{\bx}(t)=\bF(\bx(t))$ on some interval of existence.
    Since $t\mapsto \bx(t)$ is continuous, the composition $\eta(t)\coloneqq H(\bx(t))$ is continuous. 
    % This definition allows $\mathcal{S}$ to be equivalently represented as
    % $\mathcal{S} = \{\bx\in\R^n\mid H(\bx)\geq0\}$.
    For the sake of contradiction, suppose that $\mathcal{S}$ is not forward invariant; that is, suppose that $\bx_0\in\mathcal{S}$ but there exists at least one solution of $\bF$ that leaves $\mathcal{S}$. Since any solution $t\mapsto \bx(t)$ is continuous, this implies there must exist a time $t_1\geq0$ at which $\bx(t_1)\in\partial \mathcal{S}$ and $\eta(t_1)=0$. Further, if such a solution were to leave $\mathcal{S}$, there must exist another time $t_2>t_1$ such that $\bx(t)\notin\mathcal{S}$ and $\eta(t)<0$ for all $t\in(t_1,t_2]$. Since $\bx(t_1)\in\partial\mathcal{S}$, $\mathcal{N}\supset\partial\mathcal{S}$ is open, and $t\mapsto \bx(t)$ is continuous, there also exists sufficiently small $\delta>0$ such that $\bx(t)\in\mathcal{N}\backslash\Sc$ for all $t\in(t_1,t_1 + \delta)$. Since $t\mapsto \eta(t)$ is continuous 
    % but not necessarily continuously differentiable, 
    let:
    \begin{equation}
        D_{+}\eta(t) \coloneqq \liminf_{s\rightarrow 0^+} (\eta(t + s) - \eta(t))/s,
    \end{equation}
    denote the lower right Dini derivative of $\eta$. 
    % which coincides with the standard derivative $\dot{\eta}(t)$ at any point where $\eta$ is differentiable. 
    It follows from Danskin's Theorem \cite[Prop. B.22]{Bertsekas} that $\eta(t)$ is differentiable whenever $\mathcal{M}(\bx(t))$ is a singleton; otherwise, it follows from Danskin's Theorem that:
    \begin{equation}
        D_{+}\eta(t) = \min_{\tau\in\mathcal{M}(\bx(t))}\pdv{h}{\bx}(\tau, \bx(t))\bF(\bx(t)).
    \end{equation}
    As $\bx(t)\in\mathcal{N}$ for $t\in(t_1,t_1 + \delta)$, it follows from \eqref{eq:continuous-invariance-condition} that:
    \begin{equation}
        D_{+}\eta(t) \geq 0,\quad \forall t\in(t_1,t_1 + \delta).
    \end{equation}
    Since $\eta(t)$ is continuous and the above holds, $\eta$ is monotonically non-decreasing on $(t_1,t_1 + \delta)$ \cite[Ch. 2.2.3]{Blanchini}. Since $\eta(t_1)=0$, this implies that $\eta(t)\geq0$ for all $t\in(t_1,t_1 + \delta)$, contradicting the initial assumption that $\eta(t)<0$ for all $t\in(t_1,t_2]$ for some $t_2>t_1$. Hence, by contradiction, $\mathcal{S}$ must be forward invariant, as desired. 
\end{proof}

Theorem~\ref{thm:safety-under-weaker-conditions} can be used to show that the controller obtained from the OD-CBF QP~\eqref{eq:infinite-OD-CBF-QP} guarantees safety.
\begin{corollary}[OD-CBF QP guarantees safety]
    Let the hypotheses of Proposition~\ref{prop:reduction-one-OD-variable} hold and consider the controller $\bk$ obtained from~\eqref{eq:infinite-OD-CBF-QP}. Then, $\bk$ renders $\Sc$ forward invariant.
\end{corollary}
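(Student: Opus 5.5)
The argument chains three earlier results: Proposition~\ref{prop:reduction-one-OD-variable} for strict feasibility, the continuity proposition for regularity of the QP solution, and Theorem~\ref{thm:safety-under-weaker-conditions} for invariance.

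\textbf{Strict feasibility and continuity.} By Proposition~\ref{prop:reduction-one-OD-variable}, for the chosen $\alpha\in\Kc^e$ there is an open neighborhood $\Dc\supset\Sc$ on which \eqref{eq:infinite-OD-CBF-QP} is strictly feasible. I will rewrite the constraints of \eqref{eq:infinite-OD-CBF-QP} in the form \eqref{eq:infinite-inequalities}, with decision variable $\bnu=[\bu,\omega]^\top$ and data
\[
a(\tau,\bx)=-\pdv{h}{\bx}(\tau,\bx)\bf(\bx), \qquad \bb(\tau,\bx)=\bigl[-\pdv{h}{\bx}(\tau,\bx)\bg(\bx),\; -\alpha(h(\tau,\bx))\bigr].
\]
The sign constraint $\omega\ge0$ is handled by enlarging the parameter set to $\Tc\cup\{\tau_0\}$, with $a(\tau_0,\bx)=0$ and $\bb(\tau_0,\bx)=[0,-1]$. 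This enlarged set is still compact, and $a,\bb$ remain continuous. The weighted cost $\tfrac12\|\bu-\bk_{\rm d}\|^2+\tfrac p2(\omega-\theta_{\rm d})^2$ becomes a Euclidean distance after the change of variables $\omega\mapsto\sqrt p\,\omega$. Strict feasibility on $\Dc$ is exactly \eqref{eq:slaters}, so the continuity proposition, applied with $\mathcal X=\Dc$, gives continuity of $\bx\mapsto(\bk(\bx),\bar\theta(\bx))$ on $\Dc$. Assuming $\bk_{\rm d}$ is continuous, the closed-loop field $\bF=\bf+\bg\bk$ is then continuous on $\Dc$.

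\textbf{Verifying \eqref{eq:continuous-invariance-condition}.} I will apply Theorem~\ref{thm:safety-under-weaker-conditions} with $\mathcal N=\Dc$, which is an open set containing $\partial\Sc$. For $\bx\in\mathcal N\setminus\Sc$, let $\tau\in\mathcal M(\bx)$; then $h(\tau,\bx)=H(\bx)<0$. The QP constraint gives
\[
\pdv{h}{\bx}(\tau,\bx)\bF(\bx)\ \ge\ -\bar\theta(\bx)\,\alpha(h(\tau,\bx)).
\]
Since $\bar\theta(\bx)\ge0$ and $\alpha(h(\tau,\bx))<0$, the right-hand side is nonnegative, so \eqref{eq:continuous-invariance-condition} holds. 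Theorem~\ref{thm:safety-under-weaker-conditions} then yields forward invariance.

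\textbf{Expected obstacle.} The main difficulty is the domain bookkeeping. Theorem~\ref{thm:safety-under-weaker-conditions} is stated for a globally continuous $\bF$, but $\bk$ is only defined and continuous on $\Dc$. I would handle this in one of two ways:
\begin{itemize}
\item Observe that the proof of Theorem~\ref{thm:safety-under-weaker-conditions} only uses solutions near $\partial\Sc$ and inside $\Sc$, both contained in the open set $\Dc$. Peano's theorem applies on open domains, so solutions starting in $\Sc$ exist and the contradiction argument goes through unchanged.
\item Alternatively, extend $\bk$ continuously off a closed neighborhood of $\Sc$ inside $\Dc$ using the Tietze extension theorem.
\end{itemize}
A secondary check is that the reformulated constraint data stay continuous, which holds because $h$ and $\pdv{h}{\bx}$ are continuous in $(\tau,\bx)$ and $\alpha$ is continuous.
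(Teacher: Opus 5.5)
Your proposal is correct and follows the paper's proof: strict feasibility on $\Dc$ from Proposition~\ref{prop:reduction-one-OD-variable}, then for $\bx\in\Dc\setminus\Sc$ and $\tau\in\mathcal{M}(\bx)$ the sign of $-\bar{\theta}(\bx)\alpha(h(\tau,\bx))$ gives \eqref{eq:continuous-invariance-condition}, and Theorem~\ref{thm:safety-under-weaker-conditions} concludes. You also make explicit what the paper leaves implicit (continuity of $\bk$ via the continuity proposition, including absorbing $\omega\geq 0$ and the weighted cost and needing $\bk_{\rm{d}}$ continuous, and the fact that $\bk$ is only defined on $\Dc$), and you correctly claim only $\geq 0$ where the paper writes $>0$, which fails when $\bar{\theta}(\bx)=0$.
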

\begin{proof}
    By Proposition~\ref{prop:reduction-one-OD-variable}, there exists a neighborhood $\Dc$ of $\Sc$ for which~\eqref{eq:infinite-OD-CBF-QP} is strictly feasible for all $\bx\in\Dc$.
    Now, for $\bx\in\Dc\setminus\Sc$, we have that $h(\tau,\bx) < 0$ for $\tau \in \mathcal{M}(\bx) = \argmin_{\tau\in\Tc}h(\tau,\bx)$.
    This implies that for $\bx\in\Dc\setminus\Sc$, 
    \begin{align*}
        \frac{\partial h}{\partial \bx}(\tau,\bx) (\bf(\bx)+\bg(\bx)\bk(\bx)) \geq -\theta(\bx)\alpha(h(\tau,\bx)) > 0,
    \end{align*}
    for all $\tau\in\mathcal{M}(\bx)$.
    Hence, the assumptions of Theorem~\ref{thm:safety-under-weaker-conditions} hold and $\bk$ renders $\Sc$ forward invariant.
\end{proof}

While the previous results demonstrate that a continuous controller meeting an infinite collection of constraints can be synthesized using optimization---and that, under additional assumptions, continuity is sufficient to establish invariance---practically synthesizing such controllers
% Constructive procedures, such as Sontag's formula, for synthesizing a controller meeting these constraints 
is complicated by the fact that there are infinitely many constraints.
In practice, one must reduce this infinite collection to a finite one. The following result outlines a procedure for obtaining a finite collection of constraints which are guaranteed to be feasible and whose satisfaction implies that of the infinite collection.

\begin{proposition}[Reduction to finite number of constraints]\label{prop:finite-constraints}
    Let $\Tc\subset\real^p$ and $\Xc\subset\real^n$ be compact and consider locally Lipschitz functions $a:\Tc\times\Xc\to\real$, $\bb:\Tc\times\Xc\to\real^{1\times m_{\bnu}}$. 
    Suppose that for each $\bx\in\Xc$, there exists $\bnu\in\real^{m_{\bnu}}$ satisfying~\eqref{eq:infinite-inequalities} strictly.
    Then, there exist positive constants $\epsilon_*$, $\Delta_*$ and $M_*$ such that by taking $\{ \tau_i \in \Tc \}_{i=0}^N$ with $N\in\mathbb{N}$, $\bigcup_{i=1}^N \Bc_{\Delta}(\tau_i)~\footnote{Given $\tau_i\in\real^p$ and $\Delta > 0$, $\Bc_{\Delta}(\tau_i)$ denotes the ball with radius $\Delta$ centered at $\tau$, i.e., $\Bc_{\Delta}(\tau_i) := \setdef{\tau\in\real^p}{\|\tau-\tau_i \| \leq \Delta}$.} \supseteq \Xc$ and $\Delta \leq \Delta_{*}$, we have that: 
    % the inequalities 
    \begin{align}\label{eq:robust-finite-inequalities}
        a(\tau_i,\bx) + \bb(\tau_i,\bx) \bnu \leq -\epsilon_*, \quad i\in[N],
    \end{align}
    are feasible for all $\bx\in\Xc$ and any controller $\bnu^*:\Xc\to\real^{m_{\bnu}}$ satisfying them for all $\bx\in\Xc$ and such that $\| \bnu^*(\bx) \| \leq M_*$ for all $\bx\in\Xc$ also satisfies~\eqref{eq:infinite-inequalities} for all $\bx\in\Xc$.
\end{proposition}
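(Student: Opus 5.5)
Note: the covering condition should read $\bigcup_i \Bc_{\Delta}(\tau_i)\supseteq\Tc$, not $\Xc$; I use it in that form. The argument has three steps: get a uniform strict margin over compact $\Xc$, then bound the discretization error by Lipschitz constants, then choose the constants so the margin absorbs the error.

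\emph{Step 1: a uniform margin.} I would first show that strict feasibility gives a margin independent of $\bx$. For each $\bx\in\Xc$ pick $\bv(\bx)$ satisfying \eqref{eq:infinite-inequalities} strictly. Set
\[
m_{\bx}(\bx')=\max_{\tau\in\Tc}\, a(\tau,\bx')+\bb(\tau,\bx')\bv(\bx).
\]
Note this is a max, since the constraints are of $\leq 0$ type. By Berge's Maximum Theorem, exactly as in the proof of Proposition~\ref{prop:artstein-infinite-constraints}, $m_{\bx}$ is continuous, and $m_{\bx}(\bx)<0$ by compactness of $\Tc$. Hence there is an open neighborhood $\Wc(\bx)$ on which $m_{\bx}<m_{\bx}(\bx)/2$.

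Compactness of $\Xc$ yields a finite subcover $\Wc(\bx_1),\dots,\Wc(\bx_K)$. Define
\[
\bar\epsilon=\min_k |m_{\bx_k}(\bx_k)|/2>0, \qquad M_0=\max_k\|\bv(\bx_k)\|.
\]
Then for every $\bx\in\Xc$ some $\bv_k$ with $\|\bv_k\|\le M_0$ gives $a(\tau,\bx)+\bb(\tau,\bx)\bv_k\le-\bar\epsilon$ for all $\tau\in\Tc$. This is a pointwise choice only; no regularity of the selection is needed.

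\emph{Step 2: discretization error.} Since $a,\bb$ are locally Lipschitz on the compact set $\Tc\times\Xc$, they are globally Lipschitz there, with constants $L_a,L_b$. So for any $\tau$ and any $\tau_i$ with $\|\tau-\tau_i\|\le\Delta$, and any $\bnu$ with $\|\bnu\|\le M$,
\[
|a(\tau,\bx)+\bb(\tau,\bx)\bnu-a(\tau_i,\bx)-\bb(\tau_i,\bx)\bnu|\le (L_a+L_bM)\Delta .
\]

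\emph{Step 3: choice of constants.} Take $M_*\ge M_0$, for instance $M_*=M_0$, and $\epsilon_*=\bar\epsilon$. Choose $\Delta_*$ with $(L_a+L_bM_*)\Delta_*\le\epsilon_*$.

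\emph{Feasibility.} Restricting the inequalities from Step 1 to the finite set $\{\tau_i\}$ gives $a(\tau_i,\bx)+\bb(\tau_i,\bx)\bv_k\le-\epsilon_*$. So \eqref{eq:robust-finite-inequalities} is feasible at every $\bx$, by a point of norm at most $M_*$. This ensures the norm bound in the conclusion is not vacuous.

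\emph{Implication.} Let $\bnu^*$ satisfy \eqref{eq:robust-finite-inequalities} with $\|\bnu^*(\bx)\|\le M_*$, and take any $\tau\in\Tc$. By the covering there is $\tau_i$ with $\|\tau-\tau_i\|\le\Delta\le\Delta_*$. The estimate of Step 2 then gives
\[
a(\tau,\bx)+\bb(\tau,\bx)\bnu^*(\bx)\le-\epsilon_*+(L_a+L_bM_*)\Delta\le 0,
\]
which is \eqref{eq:infinite-inequalities}.

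The main obstacle is the order of quantifiers. The Lipschitz error grows with $\|\bnu\|$, which is why the bound $M_*$ is needed at all. So $M_*$ must be fixed before $\Delta_*$, yet it must be large enough to keep the finite problem feasible. Step 1 resolves this: a single $M_0$ and a single margin $\bar\epsilon$ serve every $\bx$, obtained from compactness of $\Xc$ via the finite subcover. Everything else is a direct estimate.
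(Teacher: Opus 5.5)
Your proof is correct, and it differs from the paper's only in how the uniform margin $\epsilon_*$ and the norm bound $M_*$ are obtained. The paper invokes Proposition~\ref{prop:artstein-infinite-constraints} to get one smooth selection $\bk$ on $\Xc$ satisfying \eqref{eq:infinite-inequalities} strictly. It then uses continuity of $\bk$ and compactness of $\Tc\times\Xc$ to set $-\epsilon_*=\sup_{\Tc\times\Xc} a+\bb\bk<0$ and $M_*\geq\sup_{\Xc}\|\bk\|$. You instead stop at the local neighborhoods $\Wc(\bx)$ that appear inside the Artstein argument, shrunk to keep half the margin and correctly written with a max rather than a min. You then extract a finite subcover of $\Xc$, so finitely many constant vectors $\bv_k$ deliver $\epsilon_*$ and $M_*$ directly, with no partition of unity. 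After that the two proofs coincide: the same joint Lipschitz estimate and the same choice $\Delta_*=\epsilon_*/(L_a+L_bM_*)$.

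Your route is more elementary and self-contained, and it sidesteps the fact that Proposition~\ref{prop:artstein-infinite-constraints} is stated only for $\Tc\subset\real$. The paper's route additionally exhibits a smooth feedback meeting \eqref{eq:robust-finite-inequalities} with norm at most $M_*$, which is the natural witness if one later wants continuous closed-loop dynamics. Your witness is only piecewise constant, which is enough for the statement as written.

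Your reading of the covering condition as $\bigcup_i\Bc_{\Delta}(\tau_i)\supseteq\Tc$ is what the paper's proof actually uses. The only cosmetic fix is to take $M_*=\max\{M_0,1\}$, so that $M_*$ is positive even if every $\bv_k$ is zero.
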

\begin{proof}
    Since~\eqref{eq:infinite-inequalities} are strictly feasible on $\Xc$, by Proposition~\ref{prop:artstein-infinite-constraints}
    there exists a smooth function $\bk:\Xc\to\real^{m_{\bnu}}$ such that: 
    \begin{align*}
        a(\tau,\bx) + \bb(\tau,\bx)\bk(\bx) < 0, \ \forall(\tau,\bx)\in\Tc\times\Xc.
    \end{align*}
    As $a, \bb, \bk$ are continuous, $\Tc$, $\Xc$ are compact, and $\bk$ satisfies~\eqref{eq:infinite-inequalities} strictly for $\bx\in\Xc$, there exists $\epsilon_* > 0$ such that:
    \begin{align*}
        -\epsilon_* = \sup\limits_{ \tau\in\Tc, \bx\in\Xc } a(\tau,\bx) + \bb(\tau,\bx)\bk(\bx),
    \end{align*}
    as well as $M_* > 0$ such that $\sup_{x\in\Xc} \| \bk(\bx) \| \leq M_*$.
    On the other hand, since $a$ and $\bb$ are locally Lipschitz and $\Tc\times\Xc$ is compact, $a$ and $\bb$ are Lipschitz in $\Tc\times\Xc$ with constants $L_a$, $L_b$ respectively.
    Now, let $\Delta_* := \frac{\epsilon_*}{L_a + L_b M_*}$.
    Consider $\{ \tau_i \}_{i=1}^N$
    with $\bigcup_{i=1}^N \Bc_{\Delta}(\tau_i) \supseteq \Tc$, with $\Delta \leq \Delta_*$.
    First, note that
    ~\eqref{eq:robust-finite-inequalities} are feasible for each $\bx\in\Xc$, because $\bnu = \bk(\bx)$ satisfies them.
    Now, let $\bnu^*:\Xc\to\real^{m_{\bnu}}$ satisfy~\eqref{eq:infinite-inequalities} and $\| \bnu^*(\bx) \| \leq M_*$.
    Consider $\tau\in\Tc$ and $\bx\in\Xc$.
    Since $\bigcup_{i=1}^N \Bc_{\Delta}(\tau_i) \supseteq \Tc$, $\exists i\in[N]$ such that $\tau\in\Bc_{\Delta}(\tau_i)$. Further, since $\sup_{\bx\in\Xc} \| \bnu^*(\bx) \| \leq M_*$, we have:
    \begin{align*}
        &a(\tau,\bx) + \bb(\tau,\bx)\bnu^*(\bx) \leq \\
        &a(\tau_i,\bx) + \bb(\tau_i,\bx)\bnu^*(\bx) + (L_a+L_b M_*)\Delta.
    \end{align*}
    Now, since $\Delta \leq \frac{\epsilon_*}{L_a + L_b M_*}$, we have that $a(\tau,\bx) + \bb(\tau,\bx)\bnu^*(\bx) \leq 0$. The argument is valid for any $\tau\in\Tc$ and $\bx\in\Xc$, from where the result follows.
\end{proof}

\section{Application to Backup CBFs}
We now connect our results to backup CBFs~\cite{gurriet2020scalable,GurrietICCPS18,YuxiaoCDC21,TamasACC23}, a common example where safe sets are defined by an infinite number of constraints \eqref{eq:infinite-inequalities}.
Here, one considers a candidate safe set $\mathcal{S}_{\rm{c}}$ that is not necessarily control invariant, along with a more conservative ``backup" set $\mathcal{S}_{\rm{b}}$ defined as:
\begin{equation}\label{eq:Sb}
    \begin{aligned}
        \mathcal{S}_{\rm{c}} = & \{\bx\in\R^n\mid h_{\rm{c}}(\bx) \geq 0\}, \\ 
        \mathcal{S}_{\rm{b}} = & \{\bx\in\R^n\mid h_{\rm{b}}(\bx) \geq 0\}, 
    \end{aligned}
\end{equation}
where $h_{\rm{b}}\,:\,\R^n\rightarrow\R$ and $h_{\rm{c}}\,:\,\R^n\rightarrow\R$ are continuously differentiable. 
While $\mathcal{S}_{\rm{c}}$ is not assumed to be control invariant, the
backup set $\mathcal{S}_{\rm{b}}$ is and comes with a ``backup" controller $\bk_{\rm{b}}\,:\,\R^n\rightarrow\real^m$ that enforces the forward invariance of $\mathcal{S}_{b}$ in that $\flow_{t}^{\rm{b}}(\bx)\in\mathcal{S}_{\rm{b}}$ for all $\bx\in\mathcal{S}_{\rm{b}}$ and all $t\geq0$, where:
\begin{equation}\label{eq:backup-flow}
    \begin{aligned}
        \tpdv{\flow_t^{\rm{b}}}{t}(\bx) = & \bf(\flow_t^{\rm{b}}(\bx)) + \bg(\flow_t^{\rm{b}}(\bx))\bk_{\rm{b}}(\flow_t^{\rm{b}}(\bx)), \\ 
        % \flow_{0}^{\rm{b}}(\bx) = & \bx,
    \end{aligned}
\end{equation}
and $\flow_{0}^{\rm{b}}(\bx) = \bx$,
is the flow of the dynamics under the backup controller.
The main idea behind backup CBFs is to use $\mathcal{S}_{\rm{b}}$ and $\bk_{\rm{b}}$ as an initial seed for safety, and then expand $\mathcal{S}_{\rm{b}}$ to a less conservative control invariant set by studying the backup flow $\flow_{t}^{\rm{b}}(\bx)$,
% In particular, this backup flow 
which
is used to define the \emph{implicit} safe set:
\begin{equation}
    \mathcal{S}_{I} =  \left\{\bx\mid h_{\rm{b}}(\flow_{T}^{\mathrm{b}}(\bx))\geq0\wedge h_{\rm{c}}(\flow_{\tau}^{\mathrm{b}}(\bx))\geq0,\tau\in[0,T]  \right\}, \label{eq:SI} 
\end{equation}
% \begin{equation}
%     h_{I}(\bx) =  \min\left\{h_{\rm{b}}(\flow_{T}^{\mathrm{b}}(\bx)),\min_{s\in[0,T]}h_{\rm{c}}(\flow_{s}^{\mathrm{b}}(\bx)) \right\} \label{eq:hI}
% \end{equation}
with $T>0$,
which is the set of all $\bx$ from which $\flow_t^{\rm{b}}$ returns to $\mathcal{S}_{\rm{b}}$ after $T$ seconds while remaining in $\mathcal{S}_{\rm{c}}$ up until time $T$. 
This set is defined by an infinitely many constraints and satisfies $\mathcal{S}_{I}\subset\mathcal{S}_{\rm{c}}$ so that forward invariance of $\mathcal{S}_{I}$ is implies satisfaction of the safety specification characterized by $\mathcal{S}_{\rm{c}}$. 
Controllers enforcing invariance of $\mathcal{S}_{I}$ are synthesized via:
\begin{equation}\label{eq:backup-cbf-qp}
    \begin{aligned}
        \min_{\bu\in\mathcal{U}}\quad & \tfrac{1}{2}\|\bu - \bk_{\rm{d}}(\bx)\|^2 \\ 
        \mathrm{s.t.} \quad & \dot{h}_{\rm{c}}(\flow_{\tau}^{\mathrm{b}}(\bx),\bu) \geq - \alpha_{\rm{c}}(h_{\rm{c}}(\flow_{\tau}^{\mathrm{b}}(\bx)),~\tau\in[0,T], \\
        \quad & \dot{h}_{\rm{b}}(\flow_{T}^{\mathrm{b}}(\bx),\bu) \geq - \alpha_{\rm{b}}(h_{\rm{b}}(\flow_{T}^{\mathrm{b}}(\bx)),
    \end{aligned}
\end{equation}
where $\dot{h}_c$ and $\dot{h}_b$ are defined using the sensitivity of the flow as in~\cite{gurriet2020scalable,GurrietICCPS18,YuxiaoCDC21,TamasACC23}, and $\alpha_{\rm{c}}$, $\alpha_{\rm{b}}$ are extended class $\mathcal{K}$ functions.

% While Nagumo's Theorem is widely cited to establish invariance results in the backup CBF literature, the discussion on regularity conditions that ensure the barrier conditions in \eqref{eq:backup-cbf-qp} are meaningful for invariance are comparatively sparse. 
To connect the results of the previous sections to the backup CBF approach, we first represent $\mathcal{S}_{I}$ in the form required by our results. Let $\mathcal{T}=[0,T]\cup\{T + 1\}$ and define:
\begin{equation}
    h(\tau,\bx) \coloneqq 
    \begin{cases}
        h_{\rm{c}}(\flow_{\tau}^{\rm b}(\bx)) & \tau\in[0,T], \\ 
        h_{\rm b}(\flow_{T}^{\rm b}(\bx))) & \tau = T + 1,
    \end{cases}
\end{equation}
% where $\flow_{\tau}^\mathrm{b}$ is the backup flow, $h$ is from \eqref{eq:S}, and $h_b$ is from \eqref{eq:Sb},
which satisfies:
\begin{equation*}
    \begin{aligned}
        \min_{\tau\in \mathcal{T}}h(\tau, \bx) = & \min\left\{h_{\rm{b}}(\flow_{T}^{\mathrm{b}}(\bx)),\min_{\tau\in[0,T]}h_{\rm{c}}(\flow_{\tau}^{\mathrm{b}}(\bx)) \right\},
    \end{aligned}
\end{equation*}
enabling $\mathcal{S}_{I}$ to be equivalently represented as:
\begin{equation}
    \mathcal{S}_{I} = \left\{\bx\in\R^n\mid h(\tau,\bx) \geq 0,~\forall \tau\in\mathcal{T} \right\},
\end{equation}
as required by \eqref{eq:safe-set-infinite-constraints}. Note that $\mathcal{T}$ is compact, $\bx \mapsto h(\tau,\bx)$ is continuously differentiable, and $\tau \mapsto h(\tau,\bx)$ is continuous\footnote{This follows from the fact that $[0,T]\cap\{T+1\}=\emptyset$ and any function is necessarily continuous at an isolated point; see, e.g., \cite[Ch. 4.3]{Abbott}.} for all $\tau\in\mathcal{T}$, as required by the results in preceding sections. 

Although $\mathcal{S}_I$ satisfies some of the preliminary hypothesis of our results, the practical instantiation of the safety filter \eqref{eq:backup-cbf-qp} is limited by the need to find $\alpha_{\rm{c}}\in\mathcal{K}^e$ satisfying an infinite collection of constraints\footnote{Various results establish the existence of these class $\mathcal{K}$ functions when $\mathcal{S}_{I}$ is compact \cite{TamasACC23} but do not provide procedures for obtaining them.}. To illustrate this, consider a double integrator \eqref{eq:control-affine} with $\bx=(x_1,x_2)\in\R^2$, $u\in\R$, $f(\bx)=(x_2,0)$, $\bg(\bx)=(0,1)$, with the objective of ensuring that $|x_1|\leq 1$ with an input $|u|\leq1$. Using backup CBFs, this can be accomplished by defining $h_{\rm{c}}(\bx) = 1 - x_1^2$, $h_{\rm{b}}(\bx)=\rho - \bx\T\bP\bx$, and $\bk_{\rm{b}}(\bx)=\tanh(-\bK\bx)$, where $\bP$ solves the algebraic Riccati equation and $\bK$ is an LQR gain, so that $\mathcal{S}_{\rm{b}}$ corresponds to a small sublevel set of the LQR value function $V(\bx)=\bx\T\bP\bx$ and the backup controller is a saturated LQR controller, which ensures invariance of $\mathcal{S}_{\rm{b}}$ for sufficiently small $\rho$. This backup controller defines the backup flow as in \eqref{eq:backup-flow} with $T=2$, which defines $\mathcal{S}_I$ as in \eqref{eq:SI}.

The corresponding safety filter \eqref{eq:backup-cbf-qp} is implemented via discretization of $\mathcal{T}$ according to Proposition \ref{prop:finite-constraints} with $N=42$, $\alpha_{\rm{c}}(s)=\alpha_{\rm{b}}(s)=s/2$, and $\bk_{\rm{d}}(\bx)=1$, the results of which are shown in Fig. \ref{fig:sims} (left). These simple choices of extended class $\mathcal{K}$ functions lead to infeasibility of \eqref{eq:backup-cbf-qp} at points in $\mathcal{S}_{I}$ and cause the corresponding closed-loop dynamics to be ill-defined. To resolve this, we leverage the optimal-decay approach from \eqref{eq:infinite-OD-CBF-QP} wherein the coefficient on the $\alpha$'s is automatically determined by the controller. Keeping in line with the results presented herein, we use a single OD variable for the $N=42$ constraints. The results of applying the resulting OD-CBF safety filter to the double integrator are shown in Fig. \ref{fig:sims} (right), where the controller is feasible, and the closed-loop dynamics are well-defined for all $\bx$ in $\mathcal{S}_{I}$.

\begin{figure}
    \centering
    \includegraphics{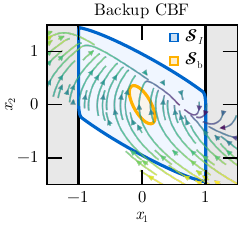}
    \hfill 
    \includegraphics[]{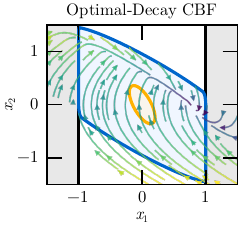}
    \vspace{-4mm}
    \caption{Closed-loop vector field the double integrator under the backup CBF controller \eqref{eq:backup-cbf-qp} [\textbf{left}] and backup CBF controller with an OD variable \eqref{eq:infinite-OD-CBF-QP} [\textbf{right}] with $\alpha_{\rm{c}}(s)=\alpha_{\rm{b}}(s)=s/2$ and $\theta_{\rm{d}}=1$. Regions not covered by the streamplot correspond to states where the closed-loop vector field is not defined due to infeasibility of the corresponding controller.}
    \vspace{-7mm}
    \label{fig:sims}
\end{figure}

\section{Conclusion}
Motivated by theoretical questions concerning backup CBFs, this paper presented a collection of results that extend CBF theory to safe sets defined by an infinite number of constraints. We outlined barrier-like conditions for invariance of these sets for autonomous systems and presented additional results for control systems that ensure the corresponding controllers enjoy desirable regularity properties.

\bibliographystyle{ieeetr}
\bibliography{biblio}

\end{document}

%% file: preamble.tex
\IEEEoverridecommandlockouts  
\usepackage{amsmath}
\usepackage{amssymb}
\usepackage{amsthm}
\usepackage{mathtools}
\usepackage{derivative}
\NewDerivative{\tpdv}{\partial}[style-frac=\tfrac]
\usepackage{xcolor}
\usepackage{bm}

\usepackage[noadjust]{cite}
\usepackage{url}

\usepackage{graphicx}
\usepackage[export]{adjustbox} % For aligning figures
\graphicspath{{figures/}}

\newtheorem{theorem}{Theorem}
\newtheorem{lemma}{Lemma}
\newtheorem{proposition}{Proposition}
\newtheorem{corollary}{Corollary}

\theoremstyle{definition}
\newtheorem{definition}{Definition}
\newtheorem{assumption}{Assumption}
\newtheorem{remark}{Remark}

\newcommand{\argmin}{\operatornamewithlimits{arg\,min}}

\newcommand{\R}{\mathbb{R}}

\newcommand{\Sc}{\mathcal{S}}

\newcommand{\T}{^\top}

\newcommand{\setdef}[2]{\{#1 \; | \; #2\}}

\newcommand{\bb}{\mathbf{b}}

\renewcommand{\bf}{\mathbf{f}} % NOTE: use \textbf if you really want to write bold non-math text.
\newcommand{\bg}{\mathbf{g}}

\newcommand{\bk}{\mathbf{k}}

\newcommand{\bu}{\mathbf{u}}
\newcommand{\bv}{\mathbf{v}}

\newcommand{\bx}{\mathbf{x}}

\newcommand{\bF}{\mathbf{F}}

\newcommand{\bK}{\mathbf{K}}

\newcommand{\bP}{\mathbf{P}}

\newcommand{\Bc}{\mathcal{B}}

\newcommand{\Dc}{\mathcal{D}}

\newcommand{\Kc}{\mathcal{K}}

\newcommand{\Nc}{\mathcal{N}}

\newcommand{\Tc}{\mathcal{T}}
\newcommand{\Uc}{\mathcal{U}}
\newcommand{\Vc}{\mathcal{V}}
\newcommand{\Xc}{\mathcal{X}}

\newcommand{\Wc}{\mathcal{W}}

\newcommand{\bphi}{\bm{\phi}}

\newcommand{\bnu}{\bm{\nu}}
\newcommand{\bvarphi}{\bm{\varphi}}

\newcommand{\des}{{\operatorname{d}}}

\usepackage{xcolor}
\definecolor{myblue}{RGB}{49, 114, 174}
\definecolor{myred}{rgb}{0.796, 0.235, 0.2}
\definecolor{mygreen}{rgb}{0.22, 0.596, 0.149}
\definecolor{mypurple}{rgb}{0.584,0.345,0.698}

\usepackage{blindtext}